\documentclass[%
 reprint,
 amsmath,amssymb,
 aps,
]{revtex4-1}

\usepackage{tikz} 
\usepackage{amsmath,amsfonts,amsthm, mathtools}
\usepackage{float}
\usepackage{dsfont}
\usepackage{csquotes}
\usepackage{amssymb}
\usepackage{verbatim}
\usepackage{bm}
\usepackage{hyperref}
\definecolor{darkred}  {rgb}{0.5,0,0}
\definecolor{darkblue} {rgb}{0,0,0.5}
\definecolor{darkgreen}{rgb}{0,0.5,0}
\hypersetup{
  colorlinks = true,
  urlcolor  = blue,         
  linkcolor = darkblue,     
  citecolor = darkgreen,    
  filecolor = darkred       
}

\theoremstyle{definition}
\newtheorem{definition}{Definition}

\newtheorem{lemma}{Lemma}

\newtheorem{theorem}{Theorem}

\newtheorem{remark}{Remark}
\newtheorem*{remark*}{Remark}

\newcommand{\mbb}{\mathbb}
\newcommand{\mc}{\mathcal}

\newcommand{\tr}{\textrm{Tr}}

\usepackage{multirow}
\newcommand{\ket}[1]{|#1\rangle}
\newcommand{\bra}[1]{\langle #1|}

\definecolor{cool_green}{rgb}{0.0, 0.5, 0.0}

\usepackage{blkarray}
\usepackage{graphicx}
\usepackage{dcolumn}
\usepackage{bm}

\begin{document}

\preprint{APS/123-QED}

\title{Summing to Uncertainty: On the Necessity of Additivity in Deriving the Born Rule}

\author{Jiaxuan Zhang}
\email{jiaxuan.zhang@physics.ox.ac.uk}

\affiliation{
Department of Physics, University of Oxford, Oxford, United Kingdom
}

\date{\today}

\begin{abstract}

The emergence of intrinsic probability has long been one of the most important and puzzling problems in quantum mechanics, and the law most directly related to this problem is the Born rule. For a century, there have been many attempts to derive the Born rule as a theorem rather than postulating it. However, existing derivations of the Born rule are each based on different frameworks and have attracted different criticisms. The assumptions from which they start are also highly divergent, and the connections between them have not been sufficiently studied. These possible connections are very likely to be the key to answering questions about the origin of probability in quantum mechanics.

This paper focuses on proving the necessity and indispensability of the additivity assumption in the derivation of the Born rule. This supports the view that the Born rule cannot be derived solely from other non-probabilistic quantum or additional postulates. We first prove that additivity cannot be derived from two other commonly used non-probabilistic additional assumptions, non-contextuality and normalization. Then we analyze the crucial role of the additivity assumption in five important existing derivations of the Born rule. These include Gleason’s Theorem, Busch’s extension of Gleason’s Theorem, the Deutsch–Wallace Theorem, Zurek’s envariance proof, and the Finkelstein–Hartle Theorem. We show that these derivations either depend heavily on the additivity assumption or lead to obvious loopholes due to the lack of additivity. We also point out some problems arised from the lack of a non-contextuality assumption.

Our results provide a novel insight into the important role of additivity assumption in quantum measurement, as well as into the origin of probability in quantum mechanics.

\end{abstract}

\maketitle



\section{Introduction}

Ever since the early development of quantum mechanics, its intrinsic probabilistic nature has been one of the most puzzling problems for physicists. This problem is described by the famous quantum measurement problem. Briefly speaking, the measurement problem concerns the question of why a quantum system evolves continuously and deterministically according to the Schrödinger equation when it is not measured, while a measurement introduces a completely different, discontinuous, and random process \cite{Norsen2017}. The probabilistic result of a measurement is described by the Born rule \cite{Born1926CollisionEN,Nielsen_Chuang_2010}:

$$p(A) = \tr{[\rho A]}$$

where $A$ is a projective operator or a positive operator-valued measure (POVM) effect. Also, if $A$ is an observable, the Born rule gives us the expectation value for the measurement \cite{Nielsen_Chuang_2010}.

In the standard formalism of quantum mechanics, the Born rule is introduced as the last axiom, or the last postulate \cite{dirac1958principles,vn2018,Nielsen_Chuang_2010,Zurek_2018}. Unlike the two simple axioms of special relativity, which have clear physical intuitions, the quantum postulates are often criticized for being complicated and lacking transparent physical interpretations \cite{Caves_2005_Properties_of_the_frequency_operator,Paw_owski_2009}. For example, why should quantum states be represented by vectors in Hilbert spaces? Why are quantum measurements represented by projective operators onto subspaces of Hilbert spaces? Most importantly, why must we postulate the Born rule, a second dynamical law that appears to be unrelated to the Schrödinger equation? And where does its probabilistic nature originate? The goal of all existing interpretations of quantum mechanics is to address this measurement problem \cite{Norsen2017}. The Born rule can therefore be regarded as a core of the measurement problem. As a result, there have been many attempts to derive it rather than postulate it \cite{gleason1957_original_GT,busch2003,deutsch1999,zurek2005,Zurek_2018,hartle2019quantummechanicsindividualsystems,Finkelstein1963}.Some of these derivations also provide, to a certain extent, reasonable explanations for other quantum postulates, like the ones for quantum states \cite{gleason1957_original_GT} and measurement observables \cite{Zurek_2018}.

It has long been desired by many researchers that the Born rule, which is the only explicitly probabilistic postulate in quantum mechanics, could be derived from the other non-probabilistic quantum postulates. This expectation is particularly strong among supporters of deterministic quantum interpretations, such as the Many-Worlds Interpretation (MWI) \cite{everett1957,Finkelstein1963,hartle2019quantummechanicsindividualsystems,Zurek_2018}. Unfortunately, this attempt does not appear to be successful. All existing derivations of the Born rule involve the introduction of additional assumptions beyond standard quantum mechanics, even without considering the debates over the validity of each derivation. Vaidman has discussed several important derivations in \cite{vaidman2020_Derivations_Born_Rule} and concludes that extra postulates are unavoidable in deriving the Born rule. While we agree with this argument, we believe that a more detailed analysis of existing derivations and their underlying postulates should be presented to support it. 

In many original works on deriving the Born rule, the assumptions employed are not presented clearly or rigorously enough. As a result, it is difficult to determine which postulates are genuinely additional, which are indispensable for the derivation, and whether some of the extra assumptions can be derived from the non-probabilistic quantum axioms or from other additional assumptions. There are a few authors attempting to combine some additional assumptions --- Logiurato and Smerzi have claimed that the additivity assumption can be replaced by the non-contextuality assumption \cite{Logiurato_2012}. Answering these questions is essential for locating the origin of probability in the Born rule and in quantum mechanics.

The focus of this work is on the necessity and importance of the additivity assumption in the derivation of the Born rule. We will show that the proposed substitution in \cite{Logiurato_2012} is unsuccessful. Additivity is a fundamental property of measures or probability measures in mathematics \cite{folland2013real}, and thus carries probabilistic feature. In contrast, the non-contextuality of measurements does not contain an explicit probabilistic property. Some authors have also argued that the probabilistic nature of the Born rule must ultimately originate from additional assumptions that themselves possess probabilistic features \cite{Barnum_2000_Quantum_probability_from_decision_theory}. In fact, we will prove in this work that among the five different derivations we analyze, there is no equivalent substitute for the additivity assumption. All five derivations require additivity itself. In some cases, additivity is assumed explicitly as a core postulate and plays a central role throughout the proof. In others, we will demonstrate the problems that arise in the absence of additivity and show how these problems can be resolved by introducing an additivity assumption.

The structure of this paper is as follows. In Section \ref{section_preliminaries}, we introduce the background material necessary for our discussion, including the non-probabilistic postulates of quantum mechanics and three commonly used additional assumptions: additivity, non-contextuality, and normalization. In Section \ref{section_additivity_relations}, we analyze the relationships between the additivity assumption and the other two, and prove that no equivalence relation exists among them. In Section \ref{section_additivity_in_5_proofs}, we examine the role of the additivity assumption in five prominent derivations of the Born rule. These include two Gleason-type derivations \cite{gleason1957_original_GT,busch2003} and three MWI derivations \cite{deutsch1999,zurek2005,hartle2019quantummechanicsindividualsystems}. We call them MWI derivations because they were originally proposed by proponents of the Many-Worlds Interpretation and exhibit certain MWI characteristics. However, these three derivations do not rely on MWI and are also valid within the framework of standard quantum mechanics. We will show that many of the problems encountered in these derivations arise from the absence of an explicit additivity assumption. Although some of these issues have already been noted in the literature, our analysis provides a novel and illuminating perspective on their underlying cause.

\section{Preliminaries}
\label{section_preliminaries}

In this section, we introduce the necessary background knowledge for the analyses in this paper. Some comes from quantum mechanics, and the rest comes from measure theory in mathematics. For our purpose, we will adjust the expressions of some definitions from measure theory. We will also give rigorous definitions for three important properties or assumptions: additivity, non-contextuality, and normalization, distinguishing between the different meanings of these concepts that appear in the literature under the same name. This will help us clarify how the apparent equivalence of two concepts can emerge from carelessness in definitions.

\subsection{Non-Probabilistic Quantum Postulates}

The postulates of the standard quantum formalism first come from the works of Dirac \cite{dirac1958principles} and von Neumann \cite{vn2018}. Combined with the definitions used in modern textbooks and literature \cite{Nielsen_Chuang_2010,Zurek_2018}, we list all the non-probabilistic quantum postulates as follows:

(QM1). \textbf{State postulate} --- A pure quantum state is represented by a unit vector $\ket{\psi}$ in a Hilbert space. A mixed state is represented by a convex combination of the density matrices of the pure states.

(QM2). \textbf{Evolution postulate} --- Quantum states evolve under unitary transformations, or equivalently, according to the Schrödinger equation.

(QM3). \textbf{Observable postulate} --- The measurement operator of any measurable physical property can be described by an observable

$$\Omega_A = \sum_i a_i A_i$$

where $A_i = \ket{a_i}\bra{a_i}$ is the projective operator of the eigenstate $\ket{a_i}$, and $a_i$ is the corresponding eigenvalue.

We can also denote the measurement operators as a set $\{A_i,a_i\}_i$. This indicates that we can perform the measurement for a single projective operator without specifying the entire set. This raises the question of contextuality, which we will discuss later.

More general quantum measurements are not only described by projections, but also by arbitrary positive semidefinite operators. Therefore, the Observable postulate has a variant:

(QM3'). \textbf{POVM postulate} --- A general quantum measurement is represented by a positive operator-valued measure (POVM):

\begin{definition}

A POVM $\{A_i\}$ is a set of operators, or POVM effects $A_i$, that satisfies the following properties:

(i) For all $i$, $A_i \ge 0$.

(ii) $\sum_i A_i = I$, where $I$ is the identity operator.

\end{definition}

Finally, we have the Eigenstate postulate, which describes a special case for quantum measurement that is deterministic and continuous:

(QM4). \textbf{Eigenstate postulate} --- If a state is in an eigenstate $\ket{a_i}$ of the observable $\Omega_A$, the probability of obtaining $a_i$ when measuring $\Omega_A$ is 1.

\subsection{Measure and Relevant Additional Assumptions}

The Observable postulate (QM3) describes the mathematical expression for the operator of a measurement operation. To obtain the measurement result, we need another function that takes measurement operators and states as input and outputs the outcomes we can observe in experiments. In standard quantum mechanics, this function is given by the Born rule. However, since our purpose is to derive the Born rule, we only assume it to be a real-valued function here. This assumption is reasonable since we never obtain imaginary values from experiments.

(M1). \textbf{Real Function postulate} --- Once we have the state and the measurement operator, the measurement process can be assumed to be a map or a function that maps the operator and the state to the extended real line:

$$\mu(\rho,A) = r \in \mbb{R} \cup \{\pm \infty\}$$

where $\rho$ is a quantum state and $A$ is a positive semidefinite operator.

Usually, $\mu$ is assumed to be a measure in the literature on derivations of the Born rule \cite{gleason1957_original_GT, busch2003}, which directly introduces many important properties, such as additivity and non-negativity. However, since our goal is to study the necessity of each assumption, we only assume $\mu$ to be a real-valued function. We include $\{\pm \infty\}$ in the codomain, since infinite values are allowed by the definition of a measure \cite{folland2013real}. This can cause problems related to continuity, which we will discuss in Section \ref{section_additivity_in_5_proofs}. Also, this output real number is not assumed to be a probability.

We now discuss some special properties that this function may have. The definitions of additivity, non-contextuality, and normalization are given below, each with two different versions. This division may seem tedious at first, but it will be shown to be essential for clarifying the relationships between these properties or assumptions.

\begin{definition}

\cite{gleason1957_original_GT,busch2003} The function $\mu$ is countable additive if for any countable set $\{A_i\}$ of mutually compatible operators $A_i$'s, 

$$
\mu(\sum_{i=1}^{\infty} A_i) = \sum_{i=1}^{\infty} \mu (A_i)
$$

Here, mutual compatibility means that the $A_i$ are mutually disjoint if the $A_i$ are projections, and that $\sum_i A_i \leq I$ if the $A_i$ are POVM effects.

\end{definition}

This is the additivity assumed by Gleason and Busch in their derivations of the Born rule \cite{gleason1957_original_GT,busch2003}. We will show below that this additivity usually (but not necessarily) involves an implicit non-contextuality assumption.

\begin{definition}

\cite{folland2013real} The function $\mu$ is finite additive if for any finite set $\{A_i\}$, it satisfies

$$
\mu(\sum_{i=1} A_i) = \sum_{i=1} \mu (A_i)
$$

\end{definition}

\begin{definition}

The measurement function $\mu$ is observable non-contextual (ONC) if the measurement result $\mu(A)$ of a measurement operator $A$ is independent of the operation of other compatible measurements.

This is the usual meaning of non-contextuality in the literature \cite{bell1966,Logiurato_2012}. The compatible measurements are referred to as the context of the measurement. Here, $A$ and the compatible measurements can be projective operators, POVM effects, or observables.

\end{definition}

\begin{definition}

The measurement function $\mu$ is additivity non-contextuality (ANC) if it has an additivity property, and this additivity property is independent of the context of the measurement.

\end{definition}

For example, let $\{A,B,C_1,C_2\}$ and $\{A,B,D_1,D_2\}$ be two different sets of measurements. The operators in both sets sum to $I$, and $\{C_1,C_2\} \ne \{D_1,D_2\}$. If ANC does not hold, we may have
$\mu(A+B|\{C_1,C_2\})=\mu(A|\{C_1,C_2\})+\mu(B|\{C_1,C_2\}) \ne \mu(A+B|\{D_1,D_2\}) = \mu(A|\{D_1,D_2\})+\mu(B|\{D_1,D_2\})$.

This property can be regarded as a sub-property of additivity.

\begin{definition}

The function $\mu$ is normalized if $\mu(I)=1$.

\end{definition}

This is the usual definition of normalization for a probability measure \cite{busch2003}. In this work, whenever we refer to normalization, we mean this definition. However, some literature uses the term normalization to refer to another meaning given below.

\begin{definition}

The function $\mu$ is strong normalized if, for any countable set $\{A_i\}$ of mutually compatible operators (projections or POVMs) satisfying $\sum_i A_i = I$, we always have $\sum_i \mu(A_i) = 1$.

\end{definition}

This property is not merely normalization, but a combination of normalization and additivity. It is misleading to call it normalization, but we define it in this way to help clarify the confusion present in some literature \cite{Logiurato_2012}.

Additionally, the State postulate (QM1) also contains a normalization property and a non-negativity property. However, these properties do not imply normalization or non-negativity for the measurement function $\mu$.

\section{Relation between Additivity and Other Assumptions}
\label{section_additivity_relations}

In this section, we discuss the relation between additivity, non-contextuality, and normalization. First, we prove the non-equivalence of additivity and non-contextuality. We present counterexamples of functions $\mu$ that satisfy one property but not the other. We also rephrase the proof of their equivalence presented in \cite{Logiurato_2012}, and explain the problem in it. At the end of this section, we prove that additivity can be derived from strong normalization, but not normalization.

\subsection{Additivity and Non-Contextuality}
\label{subsection_Additivity_and_Non-Contextuality}

Non-contextuality and additivity are two extra assumptions commonly used in different existing derivations of the Born rule. Logiurato and Smerzi claim that the two are equivalent, namely, they can be derived from each other \cite{Logiurato_2012}. We will prove that they are not equivalent. The apparent equivalence arises from ambiguity in the definitions of the assumptions.

First, it is obvious that additivity cannot be derived from ONC. A simple counterexample is

$$\mu(A)=\left(\frac{\tr[A]}{d}\right)^2 $$

where $d$ is the dimension of the Hilbert space to which operator $A$ maps. It is clear that $\mu$ satisfies ONC. However, for rank-one projective operators $A,B$ in a 2-dimensional space,

$$\mu(A+B) = 1 \ne \mu(A)+\mu(B)=\frac{1}{2}$$

Also note that this counterexample satisfies normalization but not strong normalization. The problematic conclusion of \cite{Logiurato_2012} actually arises from using strong normalization as the normalization assumption, which already implies additivity (see Lemma \ref{lemma_strong_normalization_and_additivity} below).

On the other hand, non-contextuality cannot be obtained from additivity alone either. By additivity alone, we mean additivity without ANC. It is difficult to separate the two, because some may claim that without ANC, additivity does not actually hold. However, it does hold in some unusual form. We first briefly explain why ANC is essential for the proof of ONC.

The logic of the proof in \cite{Logiurato_2012} can be illustrated with a simple example. Suppose we have two sets of projective operators $A_1,A_2,\dots$ and $B_1,B_2,\dots$. Let $A_1=B_1$, and the other $A_i$ and $B_i$ are generally different. We assume additivity and normalization, but not non-contextuality. This means that we may have

$$\mu(A_1)\ne \mu(B_1)$$

We can obtain strong normalization from normalization and additivity (see Lemma \ref{lemma_strong_normalization_and_additivity} below). We then have

$$1-\mu(A_1)=1-\mu(B_1)$$
$$\mu(A_2+A_3+...) \ne \mu(B_2+B_3+...)$$

The authors of \cite{Logiurato_2012} then claim that since the operators $A_2+A_3+\dots = B_2+B_3+\dots$, and they also share the same context $A_1=B_1$, even if ONC does not hold, we must still have $\mu(A_2+A_3+\dots) = \mu(B_2+B_3+\dots)$. This leads to a contradiction, and ONC is then derived from additivity. 

We now show why this proof is incorrect.

Consider a different setup. Let $A_1+A_2 = B_1+B_2$, but $A_i \ne B_j$ for any $i,j \in [2]$. The remaining $A_i$ and $B_i$ are generally different. Again, we assume additivity and normalization, but not non-contextuality. This means that we have

$$\mu(A_1+A_2)\ne \mu(B_1+B_2)$$
$$\mu(A_3+...) \ne \mu(B_3+...)$$

In this case, there is no contradiction, since the contexts of $A_3+\dots$ and $B_3+\dots$ are not the same.

Looking more closely at the setup discussed above, one may claim that additivity already implies

$$\mu(A_1+A_2)=\mu(B_1)+\mu(B_2)= \mu(B_1+B_2)$$

This is precisely where the confusion in the definitions arises. This equation holds only if ANC holds. ANC is a stronger non-contextuality assumption than ONC, and we can always derive the latter from the former. Although ANC is often implicitly assumed in additivity assumptions, it is important to recognize that non-contextuality is actually derived from a stronger version of non-contextuality, rather than from additivity. Without ANC, ONC cannot be derived from additivity alone.

A simple counterexample is the Equal rule. We assign $\mu(A)=1/N$ to every operator $A$ in any set of operators, where $N$ is the number of operators. Let the two sets of projective operators be $\{A_1,A_2,A_3\}$ and $\{B_1,B_2,B_3,B_4\}$, where $A_1+A_2 = B_1+B_2$ but $A_i \ne B_j$ for any $i,j \in [2]$. Then

$$\mu(A_1+A_2) = 2/3 = \mu(A_1) + \mu(A_2)$$
$$\mu(B_1+B_2) = 1/2 = \mu(B_1) + \mu(B_2)$$
$$\mu(A_1+A_2) \ne \mu(B_1+B_2)$$

This $\mu$ satisfies additivity without ANC or ONC.

\begin{remark}

This counterexample also applies to the case considered in \cite{Logiurato_2012}. Suppose we have two sets of projective operators $\{A_1,A_2\}$ and $\{B_1,B_2,B_3\}$, where $A_1=B_1$. Since both sets sum to $I$, we know that $A_2=B_2+B_3$. By the Equal rule, we obtain

$$\mu(A_2)=\frac{1}{2}\ne \mu(B_2+B_3)= \frac{2}{3}=\mu(B_2)+\mu(B_3)$$

It may seem confusing that we obtain this inequality when the operator $A_2=B_2+B_3$ and the context $A_1=B_1$ are both equivalent. The correct interpretation is that, for the measurement $\{B_1,B_2,B_3\}$, we can only calculate $\mu(B_2+B_3)$ using additivity. Otherwise, we have no way of assigning a value to $\mu(B_2+B_3)$. If we measure $B_2+B_3$ directly, the measurement operator changes. Therefore, the measurement operators are not the same in the two measurements, although the context is.

\end{remark}

Thus, we have proved that non-contextuality and additivity are independent properties. The ONC assumption cannot be derived from additivity without ANC, and the derivation in the opposite direction is even more clearly problematic. We will show below why assuming strong normalization can falsely suggest that additivity can be derived from non-contextuality, or from any property we want.

\subsection{Additivity and Normalization}
\label{subsection_additivity_and_normalization}

The relationship between additivity, normalization, and strong normalization is as follows.

\begin{lemma}
\label{lemma_strong_normalization_and_additivity}

$\mu$ is strongly normalized if and only if $\mu$ is countably additive and normalized.

Notice that if the set of operators is finite rather than countable in the definition of strong normalization, we can only prove that $\mu$ is finitely additive.

\begin{proof}

If $\mu$ is strongly normalized, we know that $\mu(I)=1$, and for mutually compatible $A_i$,

$$\mu(\sum_i A_i) + \mu(I-\sum_i A_i)= \sum_i \mu (A_i) + \mu(I-\sum_i A_i) = 1$$

Thus,

$$\mu(\sum_i A_i) = \sum_i \mu (A_i) $$

Conversely, if $\mu$ is countably additive, then for any countable set of mutually compatible operators $\{A_i\}$ such that $\sum_i A_i = I$,

$$\mu(I)=\mu(\sum_i A_i)=\sum_i\mu(A_i)$$

Since $\mu$ is normalized, we have $\sum_i \mu(A_i)=1$.

\end{proof}

\end{lemma}

Thus, if strong normalization is used as the normalization assumption, additivity is guaranteed. Since normalization is a very elementary and natural assumption, it is easy to mistakenly conclude that, with this seemingly harmless normalization assumption, additivity follows trivially from other assumptions such as non-contextuality. This conclusion is incorrect. In the derivation of the Born rule, additivity is a crucial assumption which is also indispensable (see more in Section \ref{section_additivity_in_5_proofs}).

We have already presented a counterexample in Section \ref{subsection_Additivity_and_Non-Contextuality} that satisfies normalization but not strong normalization or additivity. This implies that normalization and additivity are also independent properties and cannot be derived from each other.

\section{Additivity in Existing Derivations of the Born Rule}
\label{section_additivity_in_5_proofs}

We have proved in the previous section that additivity cannot be derived from non-contextuality or normalization. In this section, we will show the indispensability of the additivity assumption in five important existing derivations of the Born rule: two of them are Gleason's Theorem \cite{gleason1957_original_GT} and its variant \cite{busch2003}; the other three are derivations proposed by supporters of the Many-Worlds Interpretation (MWI) \cite{deutsch1999,zurek2005,hartle2019quantummechanicsindividualsystems}. Note that although there are clear features of MWI in these three proofs, the mathematical proofs do not rely on any special assumptions of MWI. Thus, we can regard all five derivations as independent of quantum interpretations.

We will give a brief introduction to each derivation, including its proof structure and postulates. This helps us identify the additivity assumptions that appear in the derivations, as well as the role that additivity plays in them. We do not discuss all the debates surrounding each derivation in detail, since doing so would involve too much content and would distract from the focus of this paper. Instead, we focus on issues related to additivity.

\subsection{Gleason's Theorem}
\label{subsection_GT}

Proved in 1957, Gleason's Theorem is one of the earliest works deriving the Born rule \cite{gleason1957_original_GT}, and is often regarded as one of the most important and reliable results in the field \cite{bell1966,Barnum_2000_Quantum_probability_from_decision_theory,Logiurato_2012}. It replaces the commonly criticized strong additivity assumption in von Neumann’s first derivation \cite{vn2018} with a more reasonable additivity assumption, and derives the Born rule with a very rigorous mathematical framework. The main concerns about Gleason's Theorem are that it does not have a very clear physical interpretation \cite{Schlosshauer_2005,Logiurato_2012,Zurek_2018,lawrence2023bornruleaxiom}, and it has an exception for two-dimensional Hilbert spaces. Although Gleason's original proof is often regarded as intricate \cite{Cooke_Keane_Moran_1985}, we can briefly rephrase its structure as follows.

First, we summarize the postulates assumed by Gleason. The only non-probabilistic quantum postulate used by Gleason is the Observable postulate (QM3). He also uses the Real Function postulate (M1). The function $\mu$ he defines takes a single projection $A_i$ as input, so the measurement context can be undetermined. The remaining postulates used by Gleason are listed below.

(G1). \textbf{Frame Function postulate (Additivity and Normalization)} --- Gleason assumes $\mu$ to be a frame function:

\begin{definition}

A frame function $\mu(\ket{x})$ of weight $W$ is a real-valued function defined on the set of unit vectors in a separable real or complex Hilbert space $\mc{H}$, such that if $\{\ket{i}\}_i$ forms an orthonormal basis of $\mathcal{H}$, then

$$
    \sum_i \mu(\ket{i})=W
$$

We also say “frame function on $\mc{H}$” below for simplicity.

\end{definition}

Note that this definition is equivalent to our strong normalization when $W=1$. Therefore, Gleason's frame function is a combined assumption of additivity and normalization. The normalization property is discussed separately below.

When $\dim{\mc{H}}<\aleph_0$, the additivity of the frame function is finite additivity; when $\dim{\mc{H}}=\aleph_0$, it is countable additivity.

(G2). \textbf{Non-Negativity postulate} --- For $\forall \ket{x}, ~~ \mu(\ket{x}) \ge 0$.

(G3). \textbf{Normalization postulate} --- From Gleason's additivity, we can define $\mu(I) = W$.

At first glance, the normalization assumption may seem not very useful. One might think that, given Gleason's additivity with ANC, $\mu(I)$ is guaranteed to be a fixed value $W$, and that $W$ can always be rescaled. However, from the definition of a measure, it is possible that $W=+\infty$. This possibility may lead to a failure in the proof of the continuity of $\mu$. Therefore, the normalization assumption and the non-negativity assumption are both introduced to exclude the infinite output values. 

Less importantly, if we want to interpret $\mu$ as a probability, we should assume normalization and rescale $W$ to 1 at the end to make the result physically meaningful and consistent with the Born rule.

(G4). \textbf{ONC postulate} --- $\mu$ depends only on its input vector, namely the single projective operator measured, and not on the measurement context. That is, for any two different contexts $\{x_i\}_i$ and $\{x'_i\}_i$, as long as $x_k=x'_k$, we have

$$\mu(\ket{x_k}|\{x_i\}_i) = \mu(\ket{x'_k}|\{x'_i\}_i)$$

The non-contextuality assumption is not stated clearly in Gleason's original paper \cite{gleason1957_original_GT}. However, in modern discussions of Gleason's Theorem, non-contextuality is often the first postulate to be mentioned and is regarded as very important \cite{Barnum_2000_Quantum_probability_from_decision_theory,Caves_2004_Gleason_Type_Derivations,Logiurato_2012,lawrence2023bornruleaxiom}. This emphasis may be influenced by the early contextual counterexample given by Bell, which satisfies Gleason's additivity and other postulates except non-contextuality, as well as by the connection of this counterexample with hidden variable interpretations of quantum mechanics \cite{bell1966}.

We also list two other definitions used in Gleason's proof to clarify some concepts in the discussion below. Note that these are not assumptions.

\begin{definition}

\cite{gleason1957_original_GT} Given a frame function $\mu$ on $\mc{H}$, we say it is regular if and only if there exists a Hermitian operator $\rho$ on $\mc{H}$ such that

$$
\mu(\ket{x}) = \bra{x} \rho \ket{x} 
$$

\end{definition}

\begin{definition}

\cite{gleason1957_original_GT} A completely real subspace $\mc{K}$ of a Hilbert space $\mc{H}$ is a linear subspace of $\mc{H}$ such that for any $\ket{x},\ket{y} \in \mc{K}$, $\langle x|y \rangle \in \mbb{R}$.

\end{definition}

It is easy to prove that a completely real subspace $\mc{K}$ is always a real Hilbert space. However, $\mc{H}$ itself does not need to be a real Hilbert space.

We now present the key ideas of Gleason's proof \cite{gleason1957_original_GT}. Note that the order here is not exactly the same as in Gleason's original paper, but is rearranged for clarity.

First, it is easy to show that a frame function $\mu$ on $\mc{H}$ remains a frame function when its domain is restricted to any subspace of $\mc{H}$. Moreover, if $\mu \geq 0$, it remains non-negative on any subspace.

Second, it is straightforward that any real or complex Hilbert space $\mc{H}$ with $\dim{\mc{H}} \ge 3$ always contains three-dimensional completely real subspaces $\mc{H}'\cong \mbb{R}^3$.

It is also straightforward that every completely real two-dimensional subspace of $\mc{H}$ is a linear subspace of some completely real three-dimensional subspace $\mc{H}' \cong \mbb{R}^3$. Furthermore, if a non-negative frame function $\mu$ is regular on $\mbb{R}^3$, then it is also regular on any completely real two-dimensional subspace of $\mbb{R}^3$.

Third, Gleason proves that if a non-negative frame function $\mu$ is regular on every completely real two-dimensional subspace of $\mc{H}$, then $\mu$ is regular on the whole space $\mc{H}$ (Lemma 3.3 and Lemma 3.4 in \cite{gleason1957_original_GT}).

Therefore, the crucial step is to prove that any non-negative frame function $\mu$ is regular on $\mbb{R}^3$. Gleason achieves this by first proving that any continuous frame function on $\mbb{R}^3$ is regular (Theorem 2.3 in \cite{gleason1957_original_GT}), and then proving that any non-negative frame function $\mu$ on $\mbb{R}^3$ is continuous (Theorem 2.8 in \cite{gleason1957_original_GT}).

The additivity of the frame function is essential in both steps. Clearly, a continuous function without additivity on $\mbb{R}^3$ is not necessarily regular, and a non-negative function without additivity on $\mbb{R}^3$ is not necessarily continuous.

Moreover, we cannot prove that an additive function on $\mbb{R}^3$ is continuous without the non-negativity assumption. The additivity of the frame function only guarantees linearity for rational coefficients, that is, $\mu(qA)=q\mu(A)$ for all $q \in \mbb{Q}$. However, over the real coefficients there exist many pathological additive functions that are discontinuous and unbounded. Gleason therefore introduces the non-negativity assumption to exclude these discontinuous frame functions \cite{gleason1957_original_GT} (see more in Appendix \ref{Appendix_Additive_Function_and_Continuity}).

Thus, both additivity and non-negativity are indispensable in Gleason's proof. Additivity plays the more important role as the central postulate. Without it, the entire proof would fail.

\subsection{Busch's Extension of Gleason's Theorem to POVMs}

Busch extends Gleason's result in a 2003 paper to more general quantum measurements, represented by POVMs \cite{busch2003}. The advantage of Busch's proof is that it is also valid for the 2-dimensional Hilbert space, where Gleason's theorem does not single out the Born rule as the only solution. Busch's proof is also simpler, clearer, and more intuitive. The price, however, is an additivity assumption much stronger than the one used by Gleason. Note that Busch's additivity is stronger because it concerns POVM effects. The probabilistic result generated from assumptions about POVM effects can be suspicious, since they do not correspond to physical values of isolated systems as projective operators \cite{Grudka_2008}. Nevertheless, the additivity applies only to compatible POVM effects, so the criticism \cite{Hermann2016_Determinism_and_Quantum_Mechanics,bell1966} of von Neumann's strong additivity assumption (which applies to any measurements, including incompatible ones) does not apply to Busch's additivity assumption. Both Gleason and Busch make an explicit additivity assumption in their proofs, so we focus on analyzing the role of additivity in their arguments in the previous and current subsections. In the other three derivations, the additivity assumption is not always presented clearly in the original works. We will analyze problems that this omission can cause.

We first briefly introduce Busch's postulates and the structure of his proof.

Busch assumes the POVM postulate (QM3') and the Real Function postulate (M1). For (M1), Busch actually assumes $\mu$ to be a probability measure, but we separate the properties of normalization and non-negativity. The remaining postulates by Busch are listed below:

(G-B1). \textbf{Countable Additivity postulate for POVM} --- $\mu$ satisfy the additivity property

$$
\mu(A+B+...)
= \mu(A) +\mu(B) + ...$$ 

if $|\{A,B,...\}| \leq \aleph_0$ and $A+B+...\le I$, i.e., the POVM effects are compatible.
 
(G-B2). \textbf{Non-Negativity postulate} --- $\mu \geq 0$.

In Gleason's proof, non-negativity is introduced to ensure continuity. It plays the same role in Busch's proof. Busch does not prove this continuity in his original paper. We provide a proof in Theorem \ref{thm_continuity_by_busch} in Appendix \ref{Appendix_Additive_Function_and_Continuity}.

(G-B3). \textbf{Normalization postulate} --- $\mu(I)=1$. 

As in Gleason's proof, this assumption is useful for ruling out infinite output values and for aligning the result with the Born rule.

(G-B4). \textbf{ONC postulate} — For any POVM effect $A$, $\mu(A)$ depends only on $A$ and not on other compatible measurements performed simultaneously.

We present another definition of linear functional here to clarify Busch's proof. This is not an assumption.

\begin{definition}
\label{def_linear_functional}

\cite{Axler2024_chap3} For a vector space $V$ over the field $K$, $f:V \rightarrow K$ is called a linear functional if for any $\vec{u},\vec{v} \in V$ and any scalar $c \in K$,

    (i)  $f(\vec{u} +\vec{v} )=f(\vec{u} )+f(\vec{v} )$
    
    (ii)  $f(c\vec{u} )=cf(\vec{u} )$
        
\end{definition}

Busch's proof is structured as follows \cite{busch2003}. It is a proven result that any linear functional $\mu(A)$ can be written as $\mu(A) = \tr{[\rho A]}$ \cite{vn2018}. Since Busch assumes countable additivity for POVM effects at the outset, what remains is to derive homogeneity, i.e. $\mu(rA)=r \mu(A)$ for $r\in \mbb{R}$. The proof for $r\in \mbb{Q}$ is straightforward. Then continuity is needed to extend homogeneity to real $r$'s. 

Although Busch's additivity is much stronger than Gleason's, establishing the required continuity (which differs from the continuity discussed in Gleason's proof; see Remark \ref{remark_diff_continuity} in Appendix \ref{Appendix_Additive_Function_and_Continuity}) still relies on the normalization and non-negativity assumptions. Nevertheless, additivity for POVM effects simplifies the proof substantially and also plays a central role at each step of Busch's derivation.

\subsection{Deutsch's Derivation with Decision Theory}

Deutsch gave another derivation of the Born rule in 1999 with the help of decision theory \cite{deutsch1999}. This is a significant result in the field. It aims to derive the probabilistic Born rule from non-probabilistic postulates from quantum mechanics and decision theory, which Deutsch describes as “Deriving a ‘tends to’ from a ‘does’.” \cite{deutsch1999} His method was later refined by Wallace and is thus also called the Deutsch-Wallace Theorem \cite{wallace2009}. Discussions about whether it can effectively achieve its goal have been ongoing \cite{zurek2005,Vaidman2012,price2008decisionsdecisionsdecisionssavage,DAWID201455}. For our purposes, we mainly discuss two problems in Deutsch's derivation: the continuity problem pointed out by Barnum \cite{barnum2003nosignallingbasedversionzureksderivation}, and the dimensionality problem pointed out by Caves \cite{Caves2005NotesOnZurek}. Note that both criticisms were originally raised in the context of Zurek's derivation (see Section \ref{subsection_Zurek_proof}) but also apply to Deutsch's derivation. Both problems are closely related to the additivity assumption. In short, although Deutsch’s additivity assumption appears very different from countable additivity at first glance, countable additivity can in fact be derived from Deutsch’s postulates. Together with another hidden non-negativity assumption in Deutsch’s derivation, the continuity of the function $\mu$ is actually guaranteed, and thus continuity is not a genuine problem. However, since Deutsch’s additivity concerns only the whole set of projective operators, there is no ANC property in Deutsch’s hidden additivity. Moreover, he uses a hidden ONC assumption not from the beginning of the proof, but only at a later step for states with unequal amplitudes. This leaves room for contextual non-Born-rule counterexamples in low-dimensional Hilbert spaces. To analyze these problems more clearly, we first briefly go through Deutsch’s postulates and proof.

Deutsch starts his proof from a value function $\mc{V}[\ket{\psi},X]$, where $X$ is an observable, and proves that this value function yields the expectation value calculated by the Born rule, if we regard quantum measurement as a game played by a rational player following certain decision theory rules. In this game, the reward of each game is equal to the measurement outcome, namely a certain eigenvalue. In this work, for consistency of comparison, we can set the eigenvalue of each eigenstate $\ket{x_i}$ to be 1. Then the expectation value for any outcome is simply the probability, and we can write $\mu[\ket{\psi},X]=\mc{V}[\ket{\psi},X]$.

Although the rationality used by Deutsch is questioned in the literature \cite{price2008decisionsdecisionsdecisionssavage}, the rules that the player must follow in this quantum measurement game are given clearly. We summarize all of Deutsch’s assumptions as follows.

First, the non-probabilistic quantum postulates used by Deutsch are the State postulate (QM1), the Observable postulate (QM3), and the Eigenstate postulate (QM4). The Real Function postulate (M1) is also assumed in the value function setup. The remainder of his explicit and implicit postulates are listed below:

(D1). \textbf{Deutsch’s Additivity postulate} — The player is indifferent between receiving a single reward $(a + b)$ in one game and receiving separate rewards $a$ and $b$ in two games.

This additivity assumption appears trivial and reasonable. However, Deutsch actually uses it to justify another important additivity, or strong normalization,

\begin{equation}
\label{Eqn_Deutsch's_strong_noramlization}
\sum_i \mu(\ket{\psi},\ket{x_i}\bra{x_i})= 1
\end{equation}

The original value function $\mc{V}$ proposed by Deutsch does not take a single projective operator $\ket{x_k}\bra{x_k}$ as input. However, we can easily define it by considering a game in which we only receive the reward $x_k$ when the corresponding eigenstate is measured, and all other $x_i$’s are discarded.

It is from this additivity assumption that the whole proof starts. It is equivalent to Deutsch’s Additivity postulate if we consider the situation of a game in real life --- It is trivial that, in a game, the sum of probabilities of all possible events should be 1. However, here Deutsch already introduces a probabilistic assumption. We will discuss in detail below how Equation (\ref{Eqn_Deutsch's_strong_noramlization}) is implicitly introduced and used.

(D2). \textbf{Non-negativity postulate} — $\mu(\ket{\psi},\ket{x_i}\bra{x_i}) \ge 0$.

This assumption is not stated explicitly by Deutsch. However, when he discusses the result for real probabilities, he effectively invokes it by claiming that if the eigenvalue for a certain eigenstate is increased while the remaining eigenvalues are unchanged, the player will prefer the new game to the old one.

Although Deutsch’s proof for the result for any real probability is problematic, this assumption does guarantee the continuity of the function $\mu$ and thus the Born rule for real probabilities. This is because it implies the non-negativity of the probability: if the probability $\mu$ of obtaining a particular eigenstate were negative, then the player should not want the corresponding eigenvalue to be larger.

(D3). \textbf{Transitivity postulate} — If the player prefers reward $a > b$ and $b > c$, then he also prefers $a > c$.

(D4). \textbf{Substitutability postulate} — In a composite game, the player is indifferent if any sub-game is replaced by another game with the same value.

This assumption is not very useful, although it appears essential at the step where the result for unequal probabilities is proven. However, the real key point is to determine when two games have the same value. This is precisely where Deutsch’s hidden non-contextuality assumption enters.

(D5). \textbf{Zero-Sum Rule postulate} — If the player always receives opposite rewards $(x$ and $-x)$ for two games, the values of the two games sum to 0.

(D6). \textbf{State Non-Contextuality and ONC postulate} --- $\mu(\ket{\psi},\ket{x_i}\bra{x_i})$ only depends on the state $\ket{\psi}$ and the projective operator $\ket{x_i}\bra{x_i}$.

This hidden ONC assumption is pointed out by Logiurato and Smerzi \cite{Logiurato_2012}. It is only essential at the step of proving the result for unequal probabilities, where compatible measurements on an ancillary system are used. However, that step also requires infinite dimensionality. Therefore, for finite-dimensional Hilbert spaces, there is no ONC assumption in Deutsch's proof, and contextual counterexamples exist (see more below, especially Equation (\ref{Eqn_counterexample_for_finite_space_in_Deutsch_proof})).

The structure of Deutsch's proof is as follows \cite{deutsch1999}. He first proves that for the simplest 2-dimensional symmetric pure state $\ket{\psi}=\frac{1}{\sqrt{2}}(\ket{x_1}+\ket{x_2})$, $\mu(\ket{\psi},\ket{x_i}\bra{x_i})$ must be equal to the modulus squared of the amplitude of the eigenstate $\ket{x_i}$. He then extends this result to the general equal-amplitude pure state $\ket{\psi}=\frac{1}{\sqrt{2}}(\ket{x_1}+\ket{x_2}+...+\ket{x_n})$ for any positive integer $n$. Next, for pure states with unequal amplitudes, Deutsch starts from states with rational amplitudes,

$$\ket{\psi}=\sqrt{\frac{m}{n}}\ket{x_1}+\sqrt{\frac{n-m}{n}}\ket{x_2}$$

Using his additivity, Deutsch proves that the value for this game is equal to another game $\mc{V}[\ket{\psi_{XY}},X\otimes Y]$, where

\begin{equation}
\label{Eqn_Deutsch_state_XY}
\ket{\psi_{XY}}=\sum_{j=1}^{m}\sqrt{\frac{1}{n}}\ket{x_1}\ket{y_j}+\sum_{j=m+1}^{n}\sqrt{\frac{1}{n}}\ket{x_2}\ket{y_j}
\end{equation}

and 

$$\sum_{j=1}^m \ket{y_j} = \sum_{j=m+1}^n \ket{y_j} =0$$

The result then follows from the equal-amplitude case. It is then straightforward to extend the result to any $\ket{\psi} = \sum \sqrt{q_i} \ket{x_i}$ with $q_i \in \mbb{Q}$.

The extension to real probabilities proposed by Deutsch is problematic. He claims that we can always find unitary transformations $U$ that transform each $\ket{x_i}$ into an eigenstate with a higher eigenvalue, and that the player will prefer this new game to the old one. Deutsch also notes that transforming the eigenvalue is not relevant to the amplitude. He therefore introduces a more complicated operation that transforms each $\ket{x_i}$ into a superposition of itself and other eigenstates, all with higher eigenvalues. He claims that it is possible to find a $U$, such that the state after the transformation, $U\ket{\psi}$, has amplitude $q_i$ with $|q_i|^2 \in \mbb{Q}$ for each eigenstate of $X$. Then the value of this new game becomes an upper bound for the initial game. Since rational numbers are dense in the reals, he argues that we can prove the result for any amplitude $\sqrt{r_i} ~~ (r_i \in \mbb{R})$ by squeezing it between upper and lower bounds obtained from limits of the $\sqrt{q_i}$’s.

However, even if this complex operation is possible, obtaining upper or lower bounds for each amplitude $\lambda_i$ in the original state $\ket{\psi}$ means making each amplitude $\lambda_i$ smaller or larger, which contradicts normalization condition for a quantum state. 

Nevertheless, this step of the proof can still be made to work if we can prove that $\mu$ is continuous with respect to the amplitudes, as we will show below. 

Deutsch then extends the result to the case where $\lambda_i$ is any complex number by arguing that applying unitary transformations that change only certain partial phase before and after the measurement will give the player equal rewards. This ends his proof \cite{deutsch1999}.

We now analyze the problem of continuity raised by Barnum \cite{barnum2003nosignallingbasedversionzureksderivation}. We have shown that Deutsch's proof of continuity does not work. However, since countable additivity and non-negativity are both implicitly present in Deutsch's derivation, the continuity of $\mu$ is in fact guaranteed by Gleason's Theorem. We first show where Deutsch introduces the hidden assumption of countable additivity.

Deutsch claims that the following equation follows from Deutsch's Additivity:

\begin{equation}
\label{Eqn_Deutsch's_+k_additivity}
\mc{V}\left[ \sum_i \lambda_i \left\lvert x_i + k \right\rangle \right] = k + \mc{V} \left[\sum_i \lambda_i \left\lvert x_i \right\rangle \right] 
\end{equation}

The idea is simply that if we add $k$ to each reward, the final expected reward also increases by $k$. From our definition of $\mu(\ket{\psi},\ket{x_i}\bra{x_i})$, we have

$$\mc{V}\left[ \sum_i \lambda_i \left\lvert x_i \right\rangle \right] = \sum_i \mu(\ket{\psi},\ket{x_i}\bra{x_i}) x_i $$

and Equation (\ref{Eqn_Deutsch's_+k_additivity}) is only valid when

$$\sum_i \mu(\ket{\psi},\ket{x_i}\bra{x_i}) = 1$$

This is the truly important additivity assumption in Deutsch's proof, and it is exactly Gleason's additivity, or strong normalization, except without ANC.

Although we could conclude that function $\mu$ must be continuous from Gleason's Theorem, we cannot say that this continuity can be proved in the same way as in Gleason's proof. Gleason's notion of continuity concerns eigenvectors rotating by a small angle \cite{gleason1957_original_GT}, rather than small changes in amplitudes. These differences between the various forms of continuity are easy to confuse and can lead to misunderstandings (see more in Appendix \ref{Appendix_Additive_Function_and_Continuity}, especially Remark \ref{remark_diff_continuity}).

At last, we expand the discussion by Caves concerning dimensionality \cite{Caves2005NotesOnZurek}. This issue is similar to the exception for $\dim{(\mc{H})}=2$ in Gleason's Theorem, but Deutsch's derivation leaves space for more exceptions.

Taking a closer look at the proof, we notice that only the first step is valid for a genuine 2-dimensional Hilbert space --- this means that the 2-dimensional space is not a subspace contained in a larger Hilbert space. The second step, for $\ket{\psi}=\frac{1}{\sqrt{2}}(\ket{x_1}+\ket{x_2}+...+\ket{x_n})$, uses induction, and for odd $n$ we need to derive the result from an $(n+1)$-dimensional space. This means, for example, that we need a 4-dimensional space to prove the Born rule for $\ket{\psi}=\frac{1}{\sqrt{2}}(\ket{x_1}+\ket{x_2}+\ket{x_3})$. To derive the Born rule for any rational $m/n$, we need an auxiliary system of arbitrary integer dimension. This can only be guaranteed when the auxiliary system has countably infinite dimension, which indicates $\dim(\mc{H}) = \aleph_0$. A $\aleph_0$-dimensional space is sufficient for the rest of the proof. Thus, Deutsch's proof actually shows that, for a state in any subspace of a $\aleph_0$-dimensional Hilbert space, the Born rule is the only solution for the value function $\mc{V}$.

We present a counterexample for $d=2$ here:

\begin{equation}
\label{Eqn_counterexample_for_finite_space_in_Deutsch_proof}
\mu(\ket{\psi},\ket{x_i}\bra{x_i})= \frac{ |\langle \psi|x_i\rangle|^4 }{ \sum_{j=1}^2 |\langle \psi|x_j\rangle|^4}   
\end{equation}

This counterexample satisfies all of Deutsch's postulates, but it apparently generates results different from the Born rule. Notice that when we have a 2-dimensional symmetric pure state, this $\mu$ function gives the same result as the Born rule --- namely,
$\mu(\frac{1}{\sqrt{2}}(\ket{x_1}+\ket{x_2}),\ket{x_1}\bra{x_1}) = \mu(\frac{1}{\sqrt{2}}(\ket{x_1}+\ket{x_2}),\ket{x_2}\bra{x_2}) = 1/2$.
This is the only result that can be derived for a 2-dimensional space from Deutsch's proof. For $d>2$, counterexamples of the same form are not valid counterexamples to Deutsch's proof, because higher dimensionality imposes more constraints. However, as long as $d<\aleph_0$, there could be states $\ket{\psi}$ for which $\mu(\ket{\psi},X)$ do not obeys the Born rule, and other counterexamples can still be constructed.

We should note that this counterexample still satisfies both Deutsch's Additivity postulate and strong normalization without ANC, but it does not satisfy non-contextuality. This is reasonable, because Deutsch introduces an ONC postulate only implicitly when proving the result for unequal amplitudes \cite{Logiurato_2012}. If ONC does not hold, we can have

$$\mc{V}(\sqrt{\frac{m}{n}}\ket{x_1}+\sqrt{\frac{n-m}{n}}\ket{x_2},X) \ne \mc{V}(\ket{\psi_{XY}},X\otimes Y)$$

where $\ket{\psi}_{XY}$ is as in Equation (\ref{Eqn_Deutsch_state_XY}). In that case, the proof does not work. The ONC assumption is introduced only at this step, but the infinite dimensionality of the auxiliary system $Y$ is also introduced at the same step. Therefore, counterexamples exist for any finite-dimensional Hilbert space.

\subsection{Zurek's Derivation with Envariance}
\label{subsection_Zurek_proof}

People are unsatisfied with Gleason's Theorem because of its lack of physical intuition \cite{Schlosshauer_2005,Logiurato_2012,Zurek_2018,lawrence2023bornruleaxiom}, and Deutsch's appeal to decision theory and rationality also seems suspicious to many \cite{zurek2005,Vaidman2012,price2008decisionsdecisionsdecisionssavage,DAWID201455}. By contrast, Zurek aims to provide a more physically motivated derivation of the Born rule \cite{Zurek_2003,zurek2005,Zurek_2018}. The structure of Zurek's proof is actually very similar to Deutsch's, except for the first few steps. Zurek proves the Born rule for the equal-amplitude qubit state using a property he calls envariance, which was first abbreviated as “environment-assisted invariance” in his 2003 paper \cite{Zurek_2003}, but was later updated to the more appropriate term “entanglement-assisted invariance” \cite{Zurek_2018}. This grounds Zurek's proof in the physically meaningful property of quantum entanglement.

Zurek gives special attention to the additivity assumption in his works. He claims that Gleason's additivity can be derived from a ``weak" additivity assumption \cite{zurek2005}. However, the consequence is that Zurek's additivity is in fact weaker than Gleason's additivity or Deutsch's implicit countable additivity, and thus cannot guarantee the continuity of the measurement function $\mu$. Due to the similarity between Zurek's and Deutsch's derivations, it also suffers from the same problem of dimensionality. We will first briefly introduce Zurek's proof and postulates, and then explain these two problems.

There is a very interesting innovation in Zurek's assumptions. He does not assume the orthogonality of eigenstates or the expression of observables from the Observable postulate (QM3), but instead derives them from the State postulate (QM1), the Evolution postulate (QM2), and the Eigenstate postulate (QM4) \cite{Zurek_2018}. In order not to distract from our main focus, we rephrase this proof in Appendix \ref{Appendix_Zurek_proof_QM3}. Nevertheless, this is a very inspiring result worth noting.

Zurek also uses another not-so-common non-probabilistic quantum postulate: the representation of a composite state is a vector in the tensor product space of the components' Hilbert spaces \cite{Zurek_2018}. The term ``probability" is used directly by Zurek to describe the measurement result, so we can conclude that the Real Function postulate (M1) is also assumed. Since the observable $X$ or the Schmidt decomposition of the state is given at the beginning, we can also represent the measurement function used by Zurek as $\mu(\ket{\psi},X)$. However, we do not need to assume all properties of a probability measure here, because the additivity and normalization are assumed independently by Zurek, and the non-negativity is not useful in the proof.

The additional postulates beyond non-probabilistic quantum mechanics used by Zurek are:

(E1). \textbf{Zurek's Weak Additivity and Normalization postulate} --- For two mutually exclusive events $\kappa$ and $\kappa^\perp$,

$$\mu(\kappa \cup \kappa^\perp) = 1=\mu(\kappa)+\mu(\kappa^\perp)$$

This means that the event $\kappa$ either occurs or does not occur. We also use measurement outcomes $x_i$ to represent events for simplification.

Zurek proves that the usual countable additivity can be derived from this weak additivity \cite{zurek2005}. However, we show that it ensures additivity only for states having rational probabilities for all outcomes (see proof in Appendix \ref{Appendix_Zurek's_Additivity}). That is, if outcome events $\{x_i\}_i$ are mutually disjoint, and each $\mu(\ket{x_i}\bra{x_i}) \in \mbb{Q}$, then

$$\mu \left (\sum_i \ket{x_i}\bra{x_i} \right) = \sum_i \mu \left (\ket{x_i}\bra{x_i} \right) $$

Therefore, Zurek’s additivity is not equivalent to Gleason’s additivity or Deutsch’s hidden countable additivity, but is weaker than both. The continuity of $\mu$ cannot be guaranteed even with non-negativity.

(E2). \textbf{Zurek's Locality postulate} --- The state of system $S$ cannot be affected by any operation on the environment $E$ alone, and vice versa.

(E3). \textbf{State Non-Contextuality postulate} --- For any observable, the measurement results fully depend on the quantum state $\ket{\psi}$. Moreover, if the measured system is a subsystem of a larger composite system, the composite state determines the state of the subsystem.

This corresponds to “fact 2” and “fact 3” in Zurek’s paper \cite{Zurek_2018}. Note that this assumption appears to be a form of non-contextuality, but it is not the same as the usual ONC and cannot play the role of ONC. We refer to this assumption as State Non-Contextuality. Deutsch also adopts this State Non-Contextuality postulate \cite{deutsch1999}. The main use of this postulate is to rule out any possible influence from hidden variables. Gleason’s and Busch’s approaches do not include this type of non-contextuality, because they do not assume the State postulate (QM1) at the beginning \cite{gleason1957_original_GT,busch2003}.

(E4). \textbf{ONC postulate} --- For any $\ket{\psi}$, $\mu(\ket{\psi},\ket{x_i}\bra{x_i})$ depends only on the projective operator $\ket{x_i}\bra{x_i}$, regardless of the whole observable.

The proof for states with unequal amplitudes given by Zurek is exactly the same as that of Deutsch. Thus, both of them implicitly introduce the ONC postulate at that step.

The most important concept in Zurek’s derivation, envariance, is not a postulate but a symmetry property of an entangled state under certain unitary transformations. 

\begin{definition}

\cite{zurek2005} We call a state $\ket{\Psi_{SE}}$ envariant under a unitary transformation $U_S$ if

$$
(I\otimes U_E)(U_S \otimes I) \ket{\Psi_{SE}} = \ket{\Psi_{SE}}
$$

This means that a unitary transformation $U_S$ acting only on the measured system can be undone by another unitary transformation acting only on the environment.

\end{definition}

Zurek’s proof then follows this structure \cite{zurek2005}: first, he proves the Born rule for equal-amplitude states using the envariance property. Consider an equal-amplitude state

$$\ket{\Psi_{SE}} = \frac{1}{\sqrt{2}}(\ket{x_1}\ket{E_1}+\ket{x_2}\ket{E_2})$$

We can swap the eigenstates of the measured system using the unitary operator

$$U_S \otimes I= (\ket{x_1}\bra{x_2}+\ket{x_2}\bra{x_1}) \otimes I$$

Let the initial probabilities for outcomes $x_1$ and $x_2$ be $p_1$ and $p_2$. Before the swap, when the environment is in eigenstate $\ket{E_1}$, the corresponding eigenstate for the system is $\ket{x_1}$; after the swap, it becomes $\ket{x_2}$. Thus, for the state after the first swap, we have $\mu(x_1) = \mu(E_2) = p_2$ and $\mu(x_2) = \mu(E_1) = p_1$. We then perform another swap acting only on the environment,

$$I \otimes U_E = I \otimes (\ket{E_1}\bra{E_2}+\ket{E_2}\bra{E_1})$$

The final composite state is then the same as the initial one, and $\mu(x_1) = p_1$ and $\mu(x_2) = p_2$ for the final state after the two swaps. However, from Zurek’s Locality postulate, an operation on the environment $E$ alone should not change the probability distribution of the system $S$. Therefore, we must conclude that $\mu(x_1) = p_2 = p_1$. From Zurek’s Weak Additivity and Normalization postulate, it follows that $p_1 = p_2 = 1/2$. 

This result can be extended to any equal-amplitude state with $n$ outcomes, yielding $p(x_i) = 1/n$, by performing the swap multiple times.

The remainder of Zurek’s proof parallels that of Deutsch \cite{zurek2005,Zurek_2018}. Similarly, extending the result to real probabilities requires continuity of $\mu$. However, unlike in Deutsch’s proof, this continuity cannot be guaranteed by Zurek’s postulates. We analyze this problem below.

Zurek discusses the continuity problem explicitly in his paper \cite{zurek2005}. His proof is as follows: if event $\kappa_1 \subseteq \kappa_2$, then the probability must satisfy $\mu(\kappa_1) \leq \mu(\kappa_2)$. Then the real probability of event $\kappa_1$ can be approached by events $\kappa_2$ with rational probabilities from above. Continuity then follows by taking limits. 

This statement is very confusing. In quantum measurements, events correspond to measurement outcomes or eigenvalues of an observable. It is unclear what an inclusion relation between two different eigenvalues or eigenspaces would mean. We could possibly let $\kappa_2$ correspond to more than one eigenstate, including the one corresponding to $\kappa_1$. However, in that case the notion of “approaching” is not very meaningful. For example, let the quantum state be

$$\ket{\psi} = \sqrt{p_1} \ket{x_1} +  \sqrt{p_2} \ket{x_2} + ...$$

where $p_1 \notin \mbb{Q}$ and $p_1+p_2 \in \mbb{Q}$. Let $\mu(\kappa_1)= \mu(x_1)$ and $\mu(\kappa_2)= \mu(x_1 \vee x_2)$. Approaching $\mu(\kappa_1) = p_1$ by decreasing $p_2$ cannot prove the continuity of $\mu$ with respect to $\ket{x_1}$. Zurek’s mistake is similar to that of Deutsch. Deutsch attempts to take limits by increasing eigenvalues, while Zurek attempts to take limits by expanding the eigenspace. Neither approach satisfies our objective of proving continuity with respect to the amplitude or the probability.

Unlike Deutsch’s proof, the continuity of $\mu$ becomes a genuine problem here. This issue has been pointed out in the literature \cite{barnum2003nosignallingbasedversionzureksderivation}, but we present a clearer explanation of why continuity cannot be derived from Zurek's assumptions and proof. 

Although Zurek uses the term probability several times in his proof, and probabilities are generally expected to be non-negative, non-negativity is not required at any step of his proof. Moreover, even if non-negativity is assumed, Zurek’s additivity is still too weak to guarantee continuity. For example, we can let $\mu(x_i)$ follow the Born rule when the square of the amplitude $|\lambda_i|^2\in \mbb{Q}$, and let $\mu(x_i) = \sqrt{2}$ (or any irrational number) when $|\lambda_i|^2 \notin \mbb{Q}$. This counterexample satisfies Zurek’s additivity and non-negativity, but not the usual countable additivity (see more in Appendix \ref{Appendix_Zurek's_Additivity}), and is clearly discontinuous.

Zurek’s proof also lacks the ONC postulate until the step dealing with unequal amplitudes, since all eigenstates of the observable are fixed at the beginning when the entangled state is specified. Therefore, it suffers from the same dimensionality problem as Deutsch’s proof, namely that the result is only valid for $\aleph_0$-dimensional Hilbert spaces.

\subsection{Hartle's Frequentist Derivation}

Another early and important derivation of the Born rule was given by Hartle in 1967 using a frequentist method based on the infinite ensemble \cite{hartle2019quantummechanicsindividualsystems}. The same method was independently proposed earlier by Finkelstein in 1963 \cite{Finkelstein1963}, and it is therefore also known as the Finkelstein-Hartle Theorem \cite{Caves_2005_Properties_of_the_frequency_operator}. This derivation is rather controversial due to its use of infinity \cite{Farhi1989,SQUIRES199067,Caves_2005_Properties_of_the_frequency_operator,Logiurato_2012,vaidman2020_Derivations_Born_Rule}. More seriously, it contains some mathematical self-incoherence. For example, the infinite ensemble state $\ket{\psi}^{\otimes \aleph_0}$ appears to be both an eigenstate of the frequency operator $f^k$ and orthogonal to all initial eigenstates of $f^k$ \cite{SQUIRES199067,Caves_2005_Properties_of_the_frequency_operator}. We will point out a new incoherence, which can be resolved by adding an additivity assumption that is lacking in the original derivation. First, we review the postulates and the proof given by Hartle \cite{hartle2019quantummechanicsindividualsystems}.

Hartle assumes only a small number of postulates. He adopts the State postulate (QM1) and the Observable postulate (QM3) from non-probabilistic quantum mechanics. The Eigenstate postulate (QM4), however, is not assumed, because $\ket{\psi}^{\otimes \aleph_0}$ is generally not one of the initial eigenstates of $f^k$. Hartle describes the measurement result as a frequency, so we can think of the Real Function postulate (M1) as being assumed as well. We can denote the frequency function as $\mu(\ket{\psi},\ket{x_k}\bra{x_k})$. Normalization and non-negativity can be deduced trivially from Hartle’s description, but they are not useful for the proof. Apart from these, Hartle uses only one additional non-contextuality postulate:

(F1). \textbf{State Non-Contextuality and ONC postulate} --- The frequency of obtaining outcome $x_i$ in a quantum measurement depends only on the measured state $\ket{\psi}$ and the projective operator $\ket{x_i}\bra{x_i}$.

The structure of Hartle’s proof is as follows \cite{hartle2019quantummechanicsindividualsystems}. He aims to show that the frequency of obtaining an outcome $x_k$ in infinitely many measurements of the pure state $\ket{\psi}$ is equal to the probability given by the Born rule, $|\langle \psi | x_k \rangle|^2$. He defines a frequency operator $f^k_N$, which measures the number of times the outcome $x_k$ occurs in $N$ measurements:

\begin{align*}
&f_N^k = \sum_{i_1 = 1}^d \sum_{i_2 = 1}^d  ~ ... \\
&~ \sum_{i_N = 1}^d   \left( \frac{1}{N} \sum_{\alpha=1}^N \delta_{k i_\alpha} \right)  \ket{x_{i_1}}_{1} ...\ket{x_{i_N}}_{N} ~ \bra{x_{i_N}}_{N} ...\bra{x_{i_1}}_{1} 
\end{align*}

This operator $f^k_N$ is a composition of projections onto $N$ eigenspaces with degenerate eigenvalues ranging from $1/N$ to $1$ \cite{hartle2019quantummechanicsindividualsystems,Caves_2005_Properties_of_the_frequency_operator}. Hartle then shows that, in the limit $N \rightarrow \aleph_0$,

$$
\| f^k \ket{\psi}^{\otimes \aleph_0}- |\langle x_k|\psi \rangle|^2 \ket{\psi}^{\otimes \aleph_0} \| = 0
$$

Certainly, some special treatment is required to define $f^k$ as the limit of $f^k_N$ when $N \rightarrow \aleph_0$. We omit the discussion of this step here, since it is not directly relevant to our focus. Hartle’s treatment can be found in \cite{hartle2019quantummechanicsindividualsystems}, and criticisms by others can be found in \cite{Farhi1989,Caves_2005_Properties_of_the_frequency_operator}.

We want to point out a new self-incoherence in Hartle's proof that has not been discussed in the literature: Hartle bases his proof on the conjecture that the infinite ensemble $\ket{\psi}^{\otimes \aleph_0}$ of any pure state $\ket{\psi}$ is an eigenstate of the frequency operator $f^k$, and he proves this conjecture at the end \cite{hartle2019quantummechanicsindividualsystems}. If this is true, it seems trivial to conclude that the infinite ensemble $\rho^{\otimes \aleph_0}$ of any mixed state $\rho$ should also be an eigenstate of $f^k$ --- strictly speaking, $\rho^{\otimes \aleph_0}$ is the projector onto an eigenstate. We refer to it simply as an eigenstate for simplicity. 

We now check whether this can be true. Let a mixed state be

$$
\rho = \sum_j p_j \ket{\psi_j}\bra{\psi_j}
$$

In his proof, Hartle sets $f^k \ket{S} = 0$ if $\ket{S} \notin \mc{J}$, where $\mathcal{J} = \overline{\text{Span}} \left\{ \ket{\psi}^{\otimes \aleph_0} \ \Big| \ |\psi\rangle \in \mathcal{H}, \ \langle \psi|\psi\rangle = 1 \right\}$ \cite{hartle2019quantummechanicsindividualsystems}.  This indicates that he intentionally avoids discussing mixed states. From this definition, we have

$$
f^k\rho^{\otimes \aleph_0} = f^k \sum_j p_j^{\aleph_0} (\ket{\psi_j}\bra{\psi_j})^{\otimes \aleph_0} + 0 + 0 + ...  
$$

Therefore, $f^k\rho^{\otimes \aleph_0} = 0$ unless $\exists p_j = 1$, which means that $\rho$ is a pure state. On the contrary, if we calculate the result for a mixed state by applying the result for pure states, we obtain

$$
f^k \sum_j p_j (\ket{\psi_j}\bra{\psi_j})^{\otimes \aleph_0} = \sum_j p_j |\langle a_k | \psi_j \rangle|^2 (\ket{\psi_j}\bra{\psi_j})^{\otimes \aleph_0}
$$

Thus, these two calculations, both consistent with Hartle's key idea, lead to different results.

As mentioned above, Hartle avoids discussing mixed states in his proof. From his analysis, we can conclude that he does not regard $\rho^{\otimes \aleph_0}$ as an eigenstate of $f^k$. However, this is inconsistent with the frequentist conjecture of the proof. Moreover, the reason why the infinite state corresponding to $\rho$ should be $\sum_j p_j (\ket{\psi_j}\bra{\psi_j})^{\otimes \aleph_0}$ rather than $\rho^{\otimes \aleph_0}$ is not explained. Clearly, these two infinite states describe different experimental setups.

We can resolve this problem by adding a countable additivity assumption to Hartle's proof. We keep Hartle’s proof for pure states and assume that the frequency function $\mu(\rho,\ket{x_k}\bra{x_k})$ is countably additive with respect to disjoint pure states with any non-negative coefficient $p_j\rho_j = p_j \ket{\psi_j}\bra{\psi_j}$. Although additivity is usually assumed for compatible observables, it can also be applied to disjoint states, since the roles of the two are interchangeable in experiments --- we can always say that we are measuring some property of the measurement device using a given quantum state, rather than saying that we are measuring the state with the device. With this additivity assumption, infinite ensembles of mixed states do not need to be eigenstates of $f^k$, and the contradiction is removed.

\section{Conclusion}

In this work, we prove the indispensable role of additivity as an additional assumption in deriving the Born rule. In Section \ref{section_additivity_relations}, we have proved that additivity cannot be derived from the other two additional assumptions, non-contextuality and normalization. In Section \ref{section_additivity_in_5_proofs}, we have shown its crucial role in five important existing derivations of the Born rule. In Gleason's Theorem and Busch's extended Gleason Theorem, it is the core postulate. In Deutsch's derivation, it is implicitly assumed and also plays a central role throughout the whole proof. In Zurek's derivation, an attempt is made to weaken the additivity assumption, but this leads to the problem of continuity. In Hartle's derivation, the lack of an additivity assumption causes incoherence in his main idea when mixed states are considered. To sum up, we not only prove the necessity of the additivity assumption in these five derivations, but also show that, up to now, there is no equivalent assumption that can substitute for it.

Additivity is an assumption with a clear probabilistic nature, while non-contextuality and normalization are not. Our results facilitate the goal of determining the minimal set of additional postulates required to derive the Born rule, and also help to better understand the origin of probability in quantum mechanics. In the previous literature, some authors argue that additional assumptions beyond the postulates of standard quantum mechanics are essential for deriving the Born rule \cite{vaidman2020_Derivations_Born_Rule}, and some even believe that these additional assumptions must have probabilistic properties \cite{Barnum_2000_Quantum_probability_from_decision_theory}. Our work provides more solid support for these positions.

We have also given an organized set of quantum mechanics postulates in Section \ref{section_preliminaries} --- it clearly shows which assumptions are regarded as additional. It is also worth noting that different derivations can use different postulates from non-probabilistic quantum mechanics. For example, the State postulate (QM1) is not used in Gleason's or Busch's proofs, but is used in the other three; the Observable postulate (QM3) or the POVM postulate (QM3') is used in all other four derivations except Zurek's, and as a compensation, Zurek's derivation is also the only one that assumes the Evolution postulate (QM2). This difference is intriguing and deserves further exploration.

In Section \ref{section_preliminaries} and Appendix \ref{Appendix_Additive_Function_and_Continuity}, we also give clear and distinguishable definitions for various types of additivity, non-contextuality, normalization, and continuity. In the literature we discuss, ambiguity in these concepts can easily cause confusion, leading people to incorrectly conclude that two assumptions are equivalent, or that one method can be applied to prove a property in another situation. Our clarification helps to eliminate these confusions. This fills a gap in the literature and is beneficial for related research in the field.

Our analyses of the five derivations are mostly based on the first version of each method. This is because we include many different methods of deriving the Born rule, and it is impossible to examine every version of every method in a single paper. Although these initial versions may contain loopholes that are better addressed in later work, they represent the origins of the method chains and are the versions with the clearest physical intuitions. Discussion of these versions is the most important first step in further investigating each method. In the future, we plan to discuss Wallace's proof of the Deutsch–Wallace Theorem \cite{wallace2009}, and the improved version of Finkelstein-Hartle Theorem by Farhi, Goldstone, and Gutmann \cite{Farhi1989}. We also plan to discuss derivations of the Born rule proposed by supporters of Bohmian mechanics \cite{D_rr_1992} and QBism \cite{DeBrota_2021}. 

Readers may have noticed the importance of non-contextuality in the derivations we discussed. We do not provide a more complete discussion of the role of non-contextuality in this work, but this point will also be addressed in subsequent works.

\section{Acknowledgement}

We thank Vlatko Vedral for valuable discussions and support on this work. We are also grateful to Chiara Marletto for inspiring discussions regarding the Deutsch-Wallace Theorem, and to Fabrizio Logiurato for helpful clarifications regarding their work on non-contextuality in the Born rule.

\onecolumngrid



\bibliography{ref.bib}

\newpage
\onecolumngrid

\section*{Appendix}

\makeatletter
\renewcommand{\p@subsection}{}
\makeatother

\subsection{Equivalence between Gleason's Additivity Assumption and the Strong Normalization}

We prove that, for projective operators, the additivity assumption (or the frame function assumption) used by Gleason \cite{gleason1957_original_GT} is equivalent to strong normalization when $W=1$.

The definition of a frame function $\mu$ of weight $W$ can be written in the following equivalent form: if a countable set $\{\ket{i}\}_i$ forms an orthonormal basis of $\mathcal{H}$ --- the definition of a separable Hilbert space guarantees the existence of such bases --- then

$$
\sum_i \mu(\ket{i}\bra{i})=W
$$

where $W$ is a fixed real value for a given $\mu$ and $\mc{H}$.

To prove our result, we first introduce two lemmas.

\begin{lemma}

We can decompose any orthogonal projection $A$ on a separable real or complex Hilbert space into a sum of disjoint rank-one orthogonal projections $\ket{i}\bra{i}$, i.e.,

$$A = \sum_{i \in \mc{I}_A} \ket{i}\bra{i}$$
    
\end{lemma}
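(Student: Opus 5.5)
The plan is to identify $A$ with its range and use the standard fact that every closed subspace of a separable Hilbert space has a countable orthonormal basis. Let $A$ be an orthogonal projection on the separable Hilbert space $\mc{H}$, real or complex. Since $A=A^2=A^\dagger$, its range $\mc{K}=A\mc{H}$ is a closed linear subspace: it equals $\ker(I-A)$, which is closed because $I-A$ is bounded.

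First, I would note that $\mc{K}$ is itself a separable Hilbert space. Closedness gives completeness. Separability follows because $A$ maps a countable dense subset of $\mc{H}$ onto a dense subset of $\mc{K}$, as $A$ is continuous and surjective onto $\mc{K}$. Applying Gram--Schmidt to a countable dense sequence in $\mc{K}$, discarding linearly dependent terms, produces a countable orthonormal set $\{\ket{i}\}_{i\in\mc{I}_A}$ whose closed span is $\mc{K}$. Hence it is an orthonormal basis of $\mc{K}$, with $|\mc{I}_A|\le\aleph_0$. If $A=0$, take $\mc{I}_A=\emptyset$.

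Next, I would verify that $\sum_{i\in\mc{I}_A}\ket{i}\bra{i}=A$. For finite $\mc{I}_A$ this is immediate by checking on $\mc{K}$ and $\mc{K}^\perp$. For countably infinite $\mc{I}_A$, the sum must be understood in the strong operator topology. For any $\ket{x}\in\mc{H}$, write $\ket{x}=A\ket{x}+(I-A)\ket{x}$ with $(I-A)\ket{x}\in\mc{K}^\perp$. Every $\ket{i}\bra{i}$ annihilates $\mc{K}^\perp$, and by Parseval's identity in $\mc{K}$ the partial sums $\sum_{i\le n}\ket{i}\langle i|x\rangle$ converge in norm to $A\ket{x}$. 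So the series converges strongly to $A$.

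Finally, the rank-one projections $\ket{i}\bra{i}$ are mutually disjoint because $\langle i|j\rangle=\delta_{ij}$ gives $\ket{i}\bra{i}\,\ket{j}\bra{j}=0$ for $i\ne j$. This is exactly the compatibility condition in the paper's definition of countable additivity for projections.

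The only delicate point is the mode of convergence in the infinite-dimensional case. The series does not converge in operator norm unless $\mc{I}_A$ is finite, so I would state explicitly that equality holds in the strong operator sense. That is the sense needed later to relate frame functions to additivity over such decompositions.
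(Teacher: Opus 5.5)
Your proof is correct and is essentially the paper's argument made rigorous. The paper just observes that $A$ has eigenvalues $0$ and $1$, diagonalizes it, and reads off the $\ket{i}\bra{i}$ from the unit diagonal entries; your orthonormal basis of the range $\mc{K}=\ker(I-A)$ is exactly that eigenvalue-$1$ eigenbasis. What you add is what the matrix picture leaves implicit in infinite dimensions: that such a basis exists and is countable (via closedness, separability, Gram--Schmidt, and the splitting $\ket{x}=A\ket{x}+(I-A)\ket{x}$ into the two eigenspaces), and that for infinite $\mc{I}_A$ the sum converges only in the strong operator topology, not in norm.
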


\begin{proof}

An orthogonal projection $A$ always has eigenvalues equal to 0 or 1. Therefore, it can be transformed into a diagonal matrix, where the diagonal entries with value 1 correspond to the $\ket{i}\bra{i}$'s.

\end{proof}

\begin{lemma}

\label{lemma_GT_sub_frame_f}

\cite{gleason1957_original_GT} If we have a frame function $\mu$ on $\mathcal{H}$ with weight $W$, then it is also a (sub-)frame function when restricted to any closed subspace of $\mathcal{H}$. The weight of the sub-frame function is usually different from $W$.

\end{lemma}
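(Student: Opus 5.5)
The restriction of $\mu$ to a closed subspace $\mathcal{K}\subseteq\mathcal{H}$ is again a frame function, and its weight is $W - \mu$ evaluated on any orthonormal basis of the complement. The plan is to prove this directly from the definition by \emph{completing bases}: any orthonormal basis of $\mathcal{K}$, glued to a fixed orthonormal basis of the complement, is an orthonormal basis of $\mathcal{H}$. The one real task is to show that the resulting sum over $\mathcal{K}$ does not depend on which basis of $\mathcal{K}$ is chosen.

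\textbf{Step 1.} Let $\mathcal{K}$ be a closed subspace of $\mathcal{H}$. Since $\mathcal{K}$ is closed, $\mathcal{H}=\mathcal{K}\oplus\mathcal{K}^\perp$. Both summands are separable, because subspaces of a separable Hilbert space are separable, so each has a countable orthonormal basis. Fix once and for all an orthonormal basis $\{\ket{f_j}\}_j$ of $\mathcal{K}^\perp$.

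\textbf{Step 2.} Let $\{\ket{e_i}\}_i$ be an arbitrary orthonormal basis of $\mathcal{K}$. Then $\{\ket{e_i}\}_i\cup\{\ket{f_j}\}_j$ is an orthonormal basis of $\mathcal{H}$, so the frame-function property gives
$$\sum_i \mu(\ket{e_i}) + \sum_j \mu(\ket{f_j}) = W.$$
Now set $W_{\mathcal{K}} := W - \sum_j \mu(\ket{f_j})$. This quantity depends only on the fixed basis of $\mathcal{K}^\perp$, so $\sum_i \mu(\ket{e_i}) = W_{\mathcal{K}}$ holds for every orthonormal basis of $\mathcal{K}$. That is exactly the statement that $\mu|_{\mathcal{K}}$ is a frame function of weight $W_{\mathcal{K}}$.

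\textbf{Step 3 (main obstacle: convergence in infinite dimensions).} When both $\mathcal{K}$ and $\mathcal{K}^\perp$ are infinite-dimensional, splitting the series over the combined basis into the two partial series needs justification. In Gleason's definition the frame sum must be the same for every ordering of a basis, since an orthonormal basis has no canonical order. A real series whose sum is independent of ordering is absolutely convergent (Riemann rearrangement theorem), so both partial series converge absolutely and their sums add. This also shows $\sum_j \mu(\ket{f_j})$ is a finite real number, so $W_{\mathcal{K}}$ is well defined. In the non-negative case the step is trivial, since sums of non-negative terms can be split freely. Finally, note that in general $W_{\mathcal{K}}\neq W$, which is the closing remark of the lemma.
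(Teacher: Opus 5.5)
Your argument is correct and uses the same idea as the paper: complete a basis of the subspace with a fixed set of orthonormal vectors outside it, so the remaining frame sum is $W$ minus a constant. The paper fixes a single vector $\ket{i_k}$ and appeals to induction, which in an $\aleph_0$-dimensional $\mathcal{H}$ only reaches subspaces of finite codimension. Fixing a whole basis of $\mathcal{K}^\perp$ at once, together with your absolute-convergence justification for splitting the series, covers every closed subspace directly.
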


\begin{proof}

We can fix one $\ket{i_k}$. Then

$$
\sum_{i \ne i_k}{f(\ket{i})} = W- f(\ket{i_k})
$$

The right-hand side of the equation is constant for all choices of $\{ \ket{i} \}_{i \ne i_k}$, which form bases of the closed subspace. By induction, we can prove this result for any subspace of any dimension.

\end{proof}

\begin{lemma}

A function $\mu$ on projective operators is a frame function with weight $W=1$ if and only if it is strongly normalized.

\end{lemma}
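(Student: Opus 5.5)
The proof has two directions, and the two decomposition lemmas just stated do most of the work. Throughout, $\mu$ is regarded as a function on orthogonal projections, and $\mu(\ket{x})$ is identified with $\mu(\ket{x}\bra{x})$.

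\emph{Strongly normalized $\Rightarrow$ frame function of weight $1$.} Take any orthonormal basis $\{\ket{i}\}_i$ of $\mc{H}$. It is countable because $\mc{H}$ is separable. The rank-one projections $\ket{i}\bra{i}$ are mutually disjoint and sum to $I$. Strong normalization then gives $\sum_i \mu(\ket{i}\bra{i})=1$ directly, which is exactly the frame-function condition with $W=1$. This direction needs nothing further.

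\emph{Frame function of weight $1$ $\Rightarrow$ strongly normalized.} Let $\{A_k\}_k$ be a countable family of mutually disjoint projections with $\sum_k A_k=I$. By the decomposition lemma, each $A_k=\sum_{i\in\mc{I}_{A_k}}\ket{i}\bra{i}$ with orthonormal $\ket{i}$ spanning $\mathrm{ran}\,A_k$. Since the $A_k$ are pairwise orthogonal and sum to $I$, the ranges are mutually orthogonal closed subspaces whose direct sum is $\mc{H}$. Hence the union $\bigcup_k\{\ket{i}\}_{i\in\mc{I}_{A_k}}$ is an orthonormal basis of $\mc{H}$, and the frame condition gives $\sum_k\sum_{i\in\mc{I}_{A_k}}\mu(\ket{i}\bra{i})=1$.

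To finish this direction I need $\mu(A_k)=\sum_{i\in\mc{I}_{A_k}}\mu(\ket{i}\bra{i})$. Here I would invoke Lemma \ref{lemma_GT_sub_frame_f}: restricted to $\mathrm{ran}\,A_k$, $\mu$ is a frame function with some weight $W_k$. So this sum equals $W_k$ and does not depend on which basis of $\mathrm{ran}\,A_k$ is chosen. The natural step is then to \emph{define} the value of $\mu$ on a higher-rank projection as this sub-frame weight, $\mu(A_k):=W_k$. This is well defined by Lemma \ref{lemma_GT_sub_frame_f}, and it is the only extension of $\mu$ from rays to all projections that is consistent with the frame structure. Substituting gives $\sum_k\mu(A_k)=1$, which is strong normalization. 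Normalization itself, $\mu(I)=1$, is the special case with $A_1=I$, i.e.\ $\mu(I)=W=1$.

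The main obstacle is this well-definedness and extension step, not the summation. A frame function is a priori defined only on unit vectors, so the statement implicitly requires that $\mu(A)$ for higher-rank $A$ mean the basis-independent sub-frame weight. This independence of the chosen basis is precisely the ANC-type content, and I must make sure it comes from Lemma \ref{lemma_GT_sub_frame_f} rather than being assumed. A secondary point in infinite dimensions is rearranging the double series over $k$ and $i\in\mc{I}_{A_k}$. I would handle it by noting that the frame condition yields the same finite value for every ordering of every basis. The double series is therefore unconditionally convergent, which justifies grouping it by $k$.
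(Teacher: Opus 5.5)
Your proposal is correct and follows the same approach as the paper. The forward direction applies strong normalization to the rank-one projections of an orthonormal basis. For the converse, the paper likewise \emph{defines} $\mu(A)$ as the basis-independent sub-frame weight $\sum_{i\in\mc{I}_A}\mu(\ket{i}\bra{i})$ supplied by Lemma~\ref{lemma_GT_sub_frame_f} and then uses $W=1$. Your extra care fills in details the paper leaves implicit: you check that the union of bases of the ranges is an orthonormal basis of $\mc{H}$, and that the series converges unconditionally so it can be regrouped.
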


\begin{proof}

If $\mu$ is strongly normalized, then by definition it is a frame function. Conversely, if $\mu$ is a frame function, to test whether $\mu$ is strongly normalized, we need to define $\mu(A)$ for arbitrary projective operator $A$. Using the additivity property of the frame function and Lemma \ref{lemma_GT_sub_frame_f}, it is natural to define

$$\mu(A) = W_A = W - \sum_{j \notin \mc{I}_A} \mu(\ket{j}\bra{j}) = \sum_{i \in \mc{I}_A} \mu(\ket{i}\bra{i})$$

Since $W=1$, $\mu$ is strongly normalized.

\end{proof}

\begin{remark}

Although ANC is often assumed together with additivity, both Gleason's additivity and strong normalization admit versions that do not require ANC.

\end{remark}

\subsection{Zurek's Proof for Observable Postulate}
\label{Appendix_Zurek_proof_QM3}

Zurek does not assume the Observable postulate (QM3). Instead, he derives it from the State postulate (QM1), the Evolution postulate (QM2), and the Eigenstate postulate (QM4). We rephrase his proof as follows \cite{Zurek_2018}.

Suppose the only two possible eigenstates of a quantum state $\ket{\psi}$ measured by an observable $X$ are $\ket{x_1}$ and $\ket{x_2}$. From the Many-Worlds Interpretation (MWI) and (QM2), the measurement process is described by an entangling unitary evolution $U_X$. Then, by the Eigenstate postulate (QM4),

$$U_X\ket{x_1}\ket{E_0} = \ket{x_1}\ket{E_1}$$

$$U_X\ket{x_2}\ket{E_0} = \ket{x_2}\ket{E_2}$$

where $\ket{E_0}$ denotes the initial state of the measurement device, and $\ket{E_1}$ and $\ket{E_2}$ denote the states corresponding to $\ket{x_1}$ and $\ket{x_2}$, respectively. Then,

$$\bra{E_0}\bra{x_1}U_X^\dagger U_X\ket{x_2}\ket{E_0} = \bra{E_1}\langle x_1 | x_2 \rangle\ket{E_2}$$

which implies

$$\langle x_1 | x_2 \rangle ( 1- \langle E_1 | E_2 \rangle)=0$$

Zurek then concludes that if we do not want $\langle E_1 | E_2 \rangle = 1$, which would mean that the outcomes are indistinguishable, we must have $\langle x_1 | x_2 \rangle = 0$, namely, the eigenstates are orthogonal. This argument applies to any two distinct eigenstates in any dimension. Thus, the orthogonality of eigenstates is established \cite{Zurek_2018}.

If we further assume that the eigenvalues are real values, then the Observable postulate (QM3) follows. This assumption is natural, since all readings on a measurement device are real values.

\subsection{Zurek's Weak Additivity Assumption and the Usual Countable Additivity}
\label{Appendix_Zurek's_Additivity}

Zurek shows how his weak additivity (E1) leads to our familiar countable additivity as follows \cite{zurek2005}.

In the situations of interest, events are described by measurement outcomes. For equal-amplitude states, if there are $n$ eigenstates, we know from Zurek's preceding proof that $\mu(x_i) = 1/n$ for any outcome $x_i$.

By Zurek's Weak Additivity (E1),

$$\mu(x_1+...+x_{n-1}) + \mu(x_n) = 1$$

$$\mu(x_1+...+x_{n-1}) = 1 - \frac{1}{n}$$

Zurek then claims that the conditional probability satisfies

$$\mu(x_1+...+x_{n-2} ~|~ x_1+...+x_{n-1}) = 1 - \frac{1}{n-1}$$

We clarify the reasoning here: since the envariant swaps still apply to the outcomes $x_1, ..., x_{n-1}$ when $x_n$ is ruled out, we can still conclude that $\mu(x_1) = \mu(x_2) = ... = \mu(x_{n-1}) = \frac{1}{n-1}$. Then,

$$\mu(x_1+...+x_{n-2}) = \left(1- \frac{1}{n}\right) \left(1- \frac{1}{n-1} \right) = \frac{n-2}{n}$$

We can permute $x_i$'s arbitrarily and repeat this multiplication multiple times. This yields

$$\mu( \sum_{j \in \mc{I}_m} x_j) = \frac{m}{n}$$

where $m=|\mc{I}_m|$.

Thus, we obtain the additivity property for states with equal amplitudes \cite{zurek2005}. We can easily extend this additivity to states with unequal but rational probabilities for the outcomes using the auxiliary system method. That is, if the outcome events $\{x_i\}_i$ are mutually disjoint and each $\mu(x_i) \in \mbb{Q}$,

$$\mu \left (\sum_i x_i \right) = \sum_i \mu \left (x_i \right) $$

However, there is no way to extend this additivity to states with real probabilities using this method. Therefore, Zurek's additivity is weaker than Gleason's additivity or the hidden countable additivity in Deutsch's proof. We cannot guarantee the continuity of $\mu$ from Zurek's Weak Additivity assumption, even when combined with non-negativity (a counterexample is presented in Section \ref{subsection_Zurek_proof}).

\subsection{Measure, Additive Function, and Continuity}
\label{Appendix_Additive_Function_and_Continuity}

We discuss the continuity issue in this section. This clarifies why questions about continuity frequently appear in derivations of the Born rule. First, we introduce the definition and some important properties of measures and additive functions from mathematics. Second, we distinguish between five different notions of continuity that appear in the literature (Remark \ref{remark_diff_continuity}): although Gleason's theorem starts from a measure, the notion of continuity used in its proof is not the same as the usual continuity discussed in measure theory \cite{gleason1957_original_GT,folland2013real}. These two are also very different from the continuity used in Busch's derivation \cite{busch2003}. Continuity for Busch's derivation, together with the continuity required in Deutsch's \cite{deutsch1999} and Zurek's proofs \cite{zurek2005}, are closer to each other and to the continuity for additive functions \cite{Kuczma2009_chap5}, but they are still not identical. This clarification is useful because it is easy to confuse these different notions of continuity and mistakenly assume that a proof for one kind of continuity will work for another. Third, we explain why an additive function can be discontinuous. The conditions for an additive function to be continuous are given in Theorem \ref{Thm_continuity_of_additive_function} and Theorem \ref{Thm_continuity_at_x_to_general_continuity} --- this helps explain why Gleason introduces the non-negativity assumption to ensure continuity \cite{gleason1957_original_GT}. Finally, in Theorem \ref{thm_continuity_by_busch} we show how Busch's additivity implies his continuity (a step that is not explicit in his original paper) \cite{busch2003}. Our proof makes clear how the flexibility of scaling POVM effects makes a simple proof for continuity possible. This flexibility is absent for additivity restricted to projective operators, as used in Gleason's, Deutsch's, and Zurek's derivations.

\begin{definition}

\cite{folland2013real} Let $X$ be a set and $\Sigma$ a $\sigma$-algebra on $X$. A measure $\mu$ on $X$ is a function $\mu:\Sigma \rightarrow \mathbb{R} \cup \{+\infty\}$ such that

(i) If $\{ A_i \}$ is a sequence of disjoint sets in $\Sigma$, then $\mu\big(\cup^\infty_{i=1} A_i\big) = \sum^\infty_{i=1} \mu(A_i)$. (countable additivity)

(ii) $\mu(\emptyset) = 0$

(iii) $\mu \geq 0$ 

\end{definition}

\begin{definition}

\cite{Kuczma2009_chap5} An additive function is a function $f: \mathbb{R}^N \rightarrow \mathbb{R}$ that satisfies 

$$f(\ket{x}+\ket{y})=f(\ket{x})+f(\ket{y})$$

for any $\ket{x},\ket{y}\in \mathbb{R}^N$. We will also write these vectors as $x,y$ for simplicity. This equation is called the Cauchy functional equation.

\end{definition}

\begin{remark}

We should note a subtle but important difference in the definitions of measure and additive function: the codomain of a measure may include $+\infty$, while the codomain of an additive function does not. Therefore, properties of additive functions do not always apply to measures. In Gleason's \cite{gleason1957_original_GT} and Busch's \cite{busch2003} derivations, which are based on measures or probability measures, the normalization and non-negativity assumptions are essential to rule out the output value $+\infty$.

\end{remark}

Also, for our purposes, we cannot always directly use results for additive functions, since our function $\mu$ takes an operator $A$ as input rather than a vector $\ket{x}$. In the end, we will find that $\mu$ is linear with respect to any $\ket{x}\bra{x}$, rather than with respect to $\ket{x}$. We analyze this in more detail in Remark \ref{remark_mu_and_additive_function} below.

We now list some important properties of additive functions.

\begin{lemma}

If $\mu$ is an additive function, then $\mu$ satisfies finite additivity, but not necessarily countable additivity.

\end{lemma}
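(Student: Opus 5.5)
The statement has two parts. First, the Cauchy equation implies finite additivity. Second, the Cauchy equation does not force countable additivity.

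For the first part I would argue by induction on the number of summands. The case of two summands is exactly the Cauchy functional equation $f(x+y)=f(x)+f(y)$. Suppose $f(\sum_{i=1}^{n} x_i)=\sum_{i=1}^{n} f(x_i)$ holds. Applying the defining equation to $y=\sum_{i=1}^{n}x_i$ and $x_{n+1}$ gives the identity for $n+1$ summands. Along the way I would record the standard consequences:
\begin{itemize}
\item $f(0)=0$, from $f(0)=f(0+0)=2f(0)$;
\item $f(-x)=-f(x)$;
\item $f(qx)=qf(x)$ for all $q\in\mbb{Q}$, by writing $q=m/n$ and using $nf(x/n)=f(x)$.
\end{itemize}
These are not needed for the statement, but they will be used later in the appendix.

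For the second part, the key point is that countable additivity needs the infinite sum $\sum_{i=1}^\infty x_i$ to be handed back to $f$. That is a limit, and the Cauchy equation alone says nothing about limits. My plan is to use a discontinuous additive function. Take a Hamel basis $\mc{B}$ of $\mbb{R}$ over $\mbb{Q}$ containing $1$ (this uses the axiom of choice). Let $f:\mbb{R}\to\mbb{R}$ be the $\mbb{Q}$-linear map that sends each element of $\mc{B}$ to $0$, except $f(1)=1$. For $\mbb{R}^N$, apply this construction to the first coordinate. Then $f$ is additive, and it vanishes on the basis element $\sqrt{2}\in\mc{B}$; we may choose $\mc{B}$ to contain both $1$ and $\sqrt{2}$, since they are $\mbb{Q}$-independent.

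Next I would pick a convergent series with rational terms summing to $\sqrt{2}$, such as $x_1=1$ and $x_i=d_i 10^{-(i-1)}$ taken from the decimal expansion. Its partial sums are rational, so $f$ equals the identity on each term and $\sum_i f(x_i)=\sum_i x_i=\sqrt{2}$. But $f(\sum_i x_i)=f(\sqrt{2})=0\neq\sqrt{2}$, so countable additivity fails.

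The main obstacle is conceptual rather than computational: one has to say precisely what countable additivity means for a function on $\mbb{R}^N$. The infinite sum must be read as the norm limit of partial sums, and the right-hand series must converge. The counterexample meets both conditions because $\sum_i x_i$ converges absolutely and $\sum_i f(x_i)$ converges to a finite value.

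One could also observe, as a sanity check, that any continuous additive function is linear and hence countably additive. Failure is therefore possible only for the pathological discontinuous solutions. This matches the later discussion of how non-negativity restores continuity.
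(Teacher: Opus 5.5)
Your proof is correct. The finite-additivity half matches the paper, which calls it immediate; your induction just spells it out. For the failure of countable additivity you take a genuinely different route.

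The paper does not construct an additive function at all. It cites a set function on $\mathcal{P}(\mathbb{N})$ with $\mu(A)=0$ for finite $A$ and $\mu(A)=1$ for infinite $A$. That example lives in the measure-theoretic setting, not the Cauchy-equation setting of the definition. As literally stated it is not even finitely additive: the evens and odds give $\mu(\mathbb{N})=1\neq 2$. It only works after restricting to the finite--cofinite algebra, or after replacing $1$ by $+\infty$, which leaves the real codomain of an additive function.

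Your Hamel-basis construction instead gives a genuine solution $f:\mathbb{R}^N\to\mathbb{R}$ of the Cauchy equation with $\sum_i f(x_i)=\sqrt{2}\neq 0=f\bigl(\sum_i x_i\bigr)$ along a norm-convergent rational series. So it proves the lemma for the object the lemma is actually about. It also makes explicit that the failure is a failure of sequential continuity, which connects directly to the appendix's later Hamel-basis discussion of discontinuous additive functions.

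The price is the axiom of choice, but in this setting that cost is unavoidable. As you observe, continuous additive functions are linear and hence countably additive, and it is consistent with ZF that every additive function $\mathbb{R}^N\to\mathbb{R}$ is continuous. The paper's route, once repaired, is elementary and choice-free. However, it only illustrates the finite-versus-countable distinction for set functions, not for solutions of the Cauchy equation.
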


\begin{proof}

It is immediate from the definition that an additive function satisfies finite additivity. For countable additivity, there is a well-known counterexample --- Let $A$ be a subset of the natural numbers $\mathbb{N}$. Define $\mu(A)=0$ if $|A|$ is finite and $\mu(A)=1$ if $|A|$ is infinite. This function $\mu$ is finitely additive but not countably additive \cite{ash2000probability}.

\end{proof}

\begin{lemma}
\label{Thm_additive_f_linear_for_q}

\cite{Kuczma2009_chap5} For an additive function $f$,

$$f(qx) = q  f(x)$$

for any $q \in \mathbb{Q}$ and $x \in \mathbb{R}^N$. 

\end{lemma}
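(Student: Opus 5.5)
The plan is the standard bootstrap for the Cauchy functional equation: go from zero to the positive integers, then to negative integers, then to reciprocals, and finally to all rationals. Throughout, the only tool is $f(x+y)=f(x)+f(y)$ for all $x,y\in\mathbb{R}^N$.

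\emph{Step 1 (zero).} Put $x=y=0$ in the Cauchy equation. This gives $f(0)=2f(0)$, so $f(0)=0$, which is the case $q=0$.

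\emph{Step 2 (positive integers).} Induct on $n\in\mathbb{N}$. We have $f((n+1)x)=f(nx+x)=f(nx)+f(x)$, so $f(nx)=nf(x)$ for every positive integer $n$ and every $x$. This is the same as the finite additivity noted in the preceding lemma, applied to $n$ copies of $x$.

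\emph{Step 3 (negative integers).} From $0=f(0)=f(x+(-x))=f(x)+f(-x)$ we get $f(-x)=-f(x)$. Combined with Step 2, this gives $f(nx)=nf(x)$ for all $n\in\mathbb{Z}$.

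\emph{Step 4 (reciprocals and rationals).} For $m$ a positive integer, apply Step 2 to the vector $x/m$:
$$f(x)=f\!\left(m\cdot\tfrac{x}{m}\right)=m\,f\!\left(\tfrac{x}{m}\right),$$
so $f(x/m)=\tfrac{1}{m}f(x)$. Now write $q=n/m$ with $n\in\mathbb{Z}$ and $m\in\mathbb{N}$. Applying Step 3 to the vector $x/m$ gives
$$f(qx)=f\!\left(n\cdot\tfrac{x}{m}\right)=n\,f\!\left(\tfrac{x}{m}\right)=\tfrac{n}{m}f(x)=qf(x).$$

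There is no real obstacle. The only point that needs care is Step 4, where the integer result must be applied to the scaled vector $x/m$ rather than to $x$ itself. This is legitimate because $\mathbb{R}^N$ is closed under real scalar multiplication. It is worth noting in the text that the argument goes no further than the rationals: passing to real $q$ would need an additional hypothesis such as continuity, boundedness or non-negativity. That gap is exactly the issue the paper then takes up in its later theorems on the continuity of additive functions.
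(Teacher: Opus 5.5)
Your proof is correct and follows the same route as the paper's: integer homogeneity from finite additivity, then applying it to the rescaled vector $x/n$ to get $f(\tfrac{m}{n}x)=\tfrac{m}{n}f(x)$. Your Steps 1 and 3 make the argument more complete than the paper's. The paper takes $m,n\in\mathbb{N}$ and so only handles positive $q$, leaving implicit both $q=0$ (which it later cites from this lemma to get $f(0)=0$) and negative $q$.
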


\begin{proof}

From finite additivity, we have $f(my)=mf(y)$ for $m \in \mbb{N}$ and $y\in \mbb{R}^N$.

Let $y=x/n$ for $n \in \mbb{N}$. Then

$$f(m \frac{x}{n})=mf( \frac{x}{n}) = \frac{m}{n}f(x) ~~ (m,n \in \mbb{N},x,y\in \mbb{R}^N)$$

$$f(qx)=qf(x)  ~~ (q=\frac{m}{n} \in \mbb{Q},x\in \mbb{R}^N)$$

\end{proof}

Notice Lemma \ref{Thm_additive_f_linear_for_q} shows that an additive function is linear for any rational coefficient, but it is not always linear for real coefficients and thus is not always a linear function.

We will show how an additive function can be nonlinear and discontinuous in Lemma \ref{lemma_additive_f_not_continuous}. To do that, we first distinguish between five different continuities.

\begin{remark}
\label{remark_diff_continuity}

There exist different definitions of continuity in the literature which indicate very different properties. It is therefore easy to confuse them. We list five different notions of continuity closely related to our topic below:

(1) \textbf{Continuity for an additive function:} An additive function $\mu: \mbb{R}^N \rightarrow \mathbb{R}$ is continuous if 

$$\lim_{x\rightarrow y} f(x)-f(y) = 0$$

Here the equal sign indicates the $(\epsilon$-$\delta)$ definition of limit. We omit an explanation of this elementary definition.

(2) \textbf{Continuity for Busch's derivation \cite{busch2003} from rational to real coefficients:} A real-valued function $\mu(A)$ (where $A$ is a POVM effect) is continuous if for any $q\in \mathbb{Q}$ and $r \in \mathbb{R}$, 

$$\lim_{q\rightarrow r} \mu(q A)-\mu(r A) = 0 $$

Certainly, for genuine continuity we should consider arbitrary $q,r \in \mbb{R}$. However, for the purpose of extending the linearity of $\mu$ from rational to real coefficients in front of $A$, the notion of continuity discussed here is enough. A proof of continuity for Busch’s derivation from his postulates is given below in Theorem \ref{thm_continuity_by_busch}.

(3) \textbf{Continuity required in Deutsch's \cite{deutsch1999} and Zurek's \cite{zurek2005} derivations:} The value function $\mu$ is continuous if, for two different quantum states $\ket{\psi_1}$ and $\ket{\psi_2}$, in which a common eigenstate $\ket{x_i}$ has amplitudes $q \in \mbb{Q}$ and $r \in \mathbb{R}$ in the two states, respectively, we always have

$$\lim_{q\rightarrow r} \mu(\ket{\psi_1},\ket{x_i}\bra{x_i}) -\mu(\ket{\psi_2},\ket{x_i}\bra{x_i}) = 0 $$

These three continuities for additive functions, Busch's derivation, and Deutsch's and Zurek's proofs, are related but distinct. Note that in Busch's notion the coefficients $q,r$ multiply a POVM effect $A$, while in Deutsch's and Zurek's notion the coefficients $q,r$ are amplitudes in front of an eigenstate $\ket{x_i}$.

(4) \textbf{Gleason's continuity for a frame function:} Let $\mu$ be a frame function. $\mu$ is continuous if for any $x$ we can find a neighborhood $W$ of $x$ such that \cite{gleason1957_original_GT}

$$\sup\{f(y) ~|~ y\in W\} - \inf \{f(y) ~|~ y\in W\} \leq \epsilon \quad (\epsilon \rightarrow 0)$$

In Gleason's paper \cite{gleason1957_original_GT} he actually mentions two different continuity notions. The first one appears at the beginning of the paper when he discusses the existence of discontinuous additive functions. The second one is the notion we show here, which he uses as a crucial step of the proof: any continuous frame function on $\mbb{R}^3$ is regular (Theorem 2.3 in \cite{gleason1957_original_GT}).

(5) \textbf{Continuity of measure concerning inclusion of sets:} In measure theory, a measure $\mu$ is continuous from below if, for sets $A_1 \subset A_2 \subset \cdots$, we have \cite{folland2013real}

$$\mu(\cup_{i=1}^{\infty} A_i) = \lim_{j \rightarrow \infty} \mu(A_j)$$

The definition for continuity from above is analogous.

There is a classical result in measure theory that if $\mu$ is countably additive, it is also continuous in this sense \cite{folland2013real}. However, this continuity is very different from continuities for Gleason's or Busch's derivation, which are the types used to derive the Born rule. Gleason's continuity concerns how $\mu$ changes when we slightly rotate $\ket{x}$, while the continuity of measure concerns how $\mu$ changes when we enlarge the set $A=\ket{x}\bra{x}$.

\end{remark}

We can now show how an additive function can be discontinuous. The following theorems are important results from the theory of additive functions.

\begin{theorem}
\label{Thm_Hamel_basis_for_R}

\cite{Kuczma2009_chap4} $\mbb{R}^N$ is a vector space over the field $\mbb{Q}$. Its basis is called a Hamel basis, which has cardinality $\aleph_1$ and cannot be written down explicitly. There are many different Hamel bases, as long as the basis vectors are linearly independent.

\end{theorem}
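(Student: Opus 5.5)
The first claim is a restriction of scalars. $\mbb{R}^N$ is a real vector space and $\mbb{Q}\subset\mbb{R}$ is a subfield, so keeping vector addition and allowing only rational scalar multiplication still satisfies every vector space axiom. No further work is needed there.

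For existence of a Hamel basis I would use Zorn's lemma. Let $\mathcal{L}$ be the family of $\mbb{Q}$-linearly independent subsets of $\mbb{R}^N$, partially ordered by inclusion. $\mathcal{L}$ is nonempty, since it contains $\emptyset$. The union of a chain in $\mathcal{L}$ is again independent, because any linear dependence involves finitely many vectors, and these all lie in one member of the chain. So every chain has an upper bound, and Zorn's lemma gives a maximal element $B$. If some $x\notin\operatorname{span}_{\mbb{Q}}B$ existed, then $B\cup\{x\}$ would be independent, contradicting maximality. Hence $B$ spans $\mbb{R}^N$ and is a basis. The same argument, started from any chosen independent set $S$ (taking $\mathcal{L}$ to be the independent sets containing $S$), extends $S$ to a basis. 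This gives the ``many different Hamel bases'' claim. For example, the sets $\{1\}$ and $\{\sqrt2\}$ extend to bases. Rescaling a single basis vector by a nonzero rational also produces a different basis.

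For the cardinality I would count finite rational combinations. Every vector of $\mbb{R}^N$ is a finite $\mbb{Q}$-combination of elements of $B$. Finite subsets of $B$ and finite tuples of rationals give at most $\max(|B|,\aleph_0)$ choices, so $|\mbb{R}^N|\le\max(|B|,\aleph_0)$. Since $\mbb{R}^N$ is uncountable, $B$ is infinite and $|B|\ge|\mbb{R}^N|=2^{\aleph_0}$. The reverse inequality is trivial, because $B\subset\mbb{R}^N$. Thus $|B|=2^{\aleph_0}$. This step carries a caveat I would state explicitly: identifying this cardinality with $\aleph_1$, as the statement does, is exactly the continuum hypothesis. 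So I would prove $|B|=\mathfrak{c}$ and note that it equals $\aleph_1$ under CH.

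The main obstacle is the phrase ``cannot be written down explicitly'', which is a metamathematical claim rather than a theorem inside ZFC. My plan is to make it precise as follows. A Hamel basis yields a discontinuous additive function: take a $\mbb{Q}$-linear map that is nonzero on one basis vector and zero on the others. Such a function is non-Lebesgue-measurable, since a measurable additive function is continuous. By Solovay's model (ZF + dependent choice, with every set of reals Lebesgue measurable), no Hamel basis can be proved to exist without a substantial use of choice. Therefore no explicit, choice-free definition of one is possible. Here I would cite the literature rather than reprove Solovay's result.
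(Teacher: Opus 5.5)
Your proposal is correct. The paper gives no proof of its own and only cites Kuczma, and your argument (a maximal $\mathbb{Q}$-independent set via Zorn's lemma, extension of any independent set, and counting finite rational combinations) is the standard proof behind that reference. You are also right that the statement overreaches: ZFC proves only $|B| = 2^{\aleph_0}$, so ``$\aleph_1$'' is the continuum hypothesis rather than a theorem, and ``cannot be written down explicitly'' is a metamathematical remark that your Solovay-model argument formalizes soundly (Shelah's ZF$+$DC model in which every set of reals has the Baire property gives the same conclusion without Solovay's inaccessible cardinal).
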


\begin{theorem}
\label{Thm_additive_f_from_Hamel_basis}

\cite{Kuczma2009_chap5} If $H$ is a Hamel basis of $\mbb{R}^N$, then for any function $g:H \rightarrow \mbb{R}$ there exists a unique additive function $f: \mbb{R}^N \rightarrow \mbb{R}$ such that

$$f(h_i) = g(h_i) ~~~~ h_i \in H$$

$f$ is continuous if and only if for every $h_i \in H$, we have $\frac{g(h_i)}{h_i} = C_0$, where $C_0$ is a constant. 

\begin{proof}

The proof is omitted here and can be found in \cite{Kuczma2009_chap5}. 

\end{proof}

\end{theorem}

\begin{lemma}
\label{lemma_additive_f_not_continuous} 
    
An additive function is not always continuous. 

\end{lemma}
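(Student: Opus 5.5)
The plan is to prove the lemma by constructing an explicit (non-constructive but well-defined) discontinuous additive function using Theorem \ref{Thm_Hamel_basis_for_R} and Theorem \ref{Thm_additive_f_from_Hamel_basis}. It suffices to treat $N=1$: a discontinuous additive $f:\mathbb{R}\to\mathbb{R}$ can be extended to $\mathbb{R}^N$ by $F(x_1,\dots,x_N)=f(x_1)$, which is additive and discontinuous along the first coordinate axis.

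First, fix a Hamel basis $H$ of $\mathbb{R}$ over $\mathbb{Q}$. Since $\mathbb{R}$ is uncountable and $\mathbb{Q}$ is countable, $H$ is infinite, so it contains two distinct elements $h_1,h_2$. We may take $h_1 = 1$ by extending the $\mathbb{Q}$-linearly independent set $\{1\}$ to a basis. Next, define $g:H\to\mathbb{R}$ by $g(h_1)=h_1$ and $g(h)=0$ for all $h\neq h_1$. Then $g(h_1)/h_1=1$ while $g(h_2)/h_2=0$, so the ratio $g(h)/h$ is not constant on $H$. Theorem \ref{Thm_additive_f_from_Hamel_basis} then provides a unique additive $f$ extending $g$, and by the same theorem $f$ is discontinuous.

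For a self-contained argument that does not rely on the omitted ``continuous iff ratio constant'' part, I would argue directly. Suppose $f$ were continuous. By Lemma \ref{Thm_additive_f_linear_for_q}, $f(q)=qf(1)=q$ for every $q\in\mathbb{Q}$. Density of $\mathbb{Q}$ and continuity would then force $f(x)=x$ for all real $x$. But $f(h_2)=0\neq h_2$, because a Hamel basis element is nonzero. This contradiction shows $f$ is discontinuous.

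The main obstacle is only foundational: the existence of a Hamel basis requires the axiom of choice (Zorn's lemma), which we simply take from Theorem \ref{Thm_Hamel_basis_for_R}. The well-definedness of the additive extension is the unique $\mathbb{Q}$-linear extension given by Theorem \ref{Thm_additive_f_from_Hamel_basis}, which is routine. Optionally, one can remark that such $f$ has a graph dense in $\mathbb{R}^2$ and is unbounded on every interval. This links back to why non-negativity (boundedness from below) is needed in Gleason's and Busch's settings, but it is not required for the lemma itself.
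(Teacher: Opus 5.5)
Your proposal is correct and uses the same construction as the paper: extend a $\mathbb{Q}$-independent set to a Hamel basis, prescribe $g$ on it with a non-constant ratio $g(h)/h$, and take the additive extension given by Theorem~\ref{Thm_additive_f_from_Hamel_basis}. Your self-contained check is more rigorous than the paper's: rational homogeneity plus density of $\mathbb{Q}$ would force $f(x)=x$, which fails at $h_2$. The paper instead argues that two ``nearby'' basis vectors $h_1\to h_2$ receive very different values, and that by itself does not establish discontinuity without the omitted ``iff'' clause of the theorem.
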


\begin{proof}

From Theorem \ref{Thm_Hamel_basis_for_R} and Theorem \ref{Thm_additive_f_from_Hamel_basis}, we can construct a Hamel basis for $\mbb{R}^N$ which contains two basis vectors $h_1,h_2$ such that $h_1 \rightarrow h_2$. They are guaranteed to be Hamel basis vectors as long as $\frac{h_1}{h_2} \notin \mbb{Q}$. Define

\begin{align*}
g(h_1) &= C_1~h_1 \\
g(h_2) &= C_2~h_2 \\
\end{align*}

where $C_1,C_2\in \mbb{R}$. We can let $C_1 << C_2$, for example, $C_1=1, C_2=10000$. Then we can construct the additive function $f$ based on $g$. For this $f$ one has

$$h_1 \rightarrow h_2, ~~~~ \text{but} ~~f(h_1) = h_1 << f(h_2) = 10000 h_2$$

So additive function $f$ is not continuous.

\end{proof}

\begin{remark}

The measure function $\mu$ used by Gleason takes operators as inputs, not vectors. This is why Gleason introduces another additive function $g:\mbb{R} \rightarrow \mbb{R}$ and claims that $g \circ \mu$ is still a frame function and can be wildly discontinuous \cite{gleason1957_original_GT}. All such pathological frame functions are unbounded (see Theorem \ref{Thm_continuity_of_additive_function} below). Gleason then introduces the non-negativity assumption to guarantee boundedness and thus continuity.

\end{remark}

We now present the conditions for an additive function to be a linear functional.

\begin{theorem}
\label{Thm_continuity_of_additive_function}

\cite{Kuczma2009_chap5} An additive function $f:\mbb{R}^N \rightarrow \mbb{R}$ has the form

$$f(\ket{x}) = \langle c | x\rangle = \sum_{i=1}^N c_i x_i$$

if it satisfies any of the conditions below:

(a) $f$ is continuous. 

(b) $f$ is continuous at some point $x$. 

(c) $f$ is bounded below or above on some interval.

(d) $f \ge 0$ or $f \le 0$ for all $x$.

\end{theorem}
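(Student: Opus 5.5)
The plan is to prove the implications (d) $\Rightarrow$ (c) $\Rightarrow$ (b) $\Rightarrow$ (a), and then (a) $\Rightarrow$ linear form. Throughout I use the rational homogeneity of Lemma \ref{Thm_additive_f_linear_for_q}, $f(qx)=qf(x)$ for $q\in\mbb{Q}$. I also use the fact that additivity gives $f(0)=0$ and $f(-x)=-f(x)$.

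\textbf{(a) $\Rightarrow$ form.} Let $e_1,\dots,e_N$ be the standard basis and set $c_i=f(e_i)$. By finite additivity, $f(x)=\sum_i f(x_ie_i)$, so it suffices to treat one coordinate, i.e. the additive map $t\mapsto f(te_i)$ on $\mbb{R}$. For rational $t$, Lemma \ref{Thm_additive_f_linear_for_q} gives $f(te_i)=tc_i$. For real $t$, take rationals $q_n\to t$; continuity then yields $f(te_i)=\lim_n q_nc_i=tc_i$. Hence $f(x)=\sum_i c_ix_i=\langle c|x\rangle$.

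\textbf{(b) $\Rightarrow$ (a).} Suppose $f$ is continuous at $x_0$. For any $y$ and $h\to 0$, additivity gives $f(y+h)-f(y)=f(h)=f(x_0+h)-f(x_0)\to 0$. So $f$ is continuous everywhere.

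\textbf{(d) $\Rightarrow$ (c).} This is immediate: $f\ge 0$ means $f$ is bounded below on every interval, and $f\le 0$ is handled by replacing $f$ with $-f$.

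\textbf{(c) $\Rightarrow$ (b).} This is the main obstacle. Here I interpret ``interval'' as a box or ball $B(x_0,\delta)$ in $\mbb{R}^N$, and assume $f\ge -M$ there (the bounded-above case follows by passing to $-f$). The plan has two steps.

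First, I show $f$ is bounded on the symmetric ball $B(0,\delta)$. For $\|h\|<\delta$ we have $f(h)=f(x_0+h)-f(x_0)\ge -M-f(x_0)$. Applying the same bound to $-h$ gives $f(h)=-f(-h)\le M+f(x_0)$. So $|f|\le K$ on $B(0,\delta)$.

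Second, I use rational scaling to shrink this bound. Given $\epsilon>0$, choose $n\in\mbb{N}$ with $K/n<\epsilon$. If $\|h\|<\delta/n$, then $nh\in B(0,\delta)$, so $|f(h)|=|f(nh)|/n\le K/n<\epsilon$. Thus $f$ is continuous at $0$, which is (b).

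The only delicate points are the symmetrization trick in the first step, which needs $f(-h)=-f(h)$, and the care needed to make sure the scaling uses only integer, and hence rational, factors.
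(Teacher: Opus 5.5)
Your proposal is correct. Parts (a), (b) and (d) follow the paper's argument: a coordinatewise reduction with rational homogeneity and density, transporting continuity from one point to $0$ and then everywhere, and treating (d) as a special case of (c).

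The genuine difference is in (c). The paper only sketches this step. It notes that an additive function satisfies $f(\frac{x+y}{2})\le\frac{f(x)+f(y)}{2}$ and then appeals to Kuczma's theory of (Jensen-)convex functions, i.e.\ a Bernstein--Doetsch-type result that a convex function bounded on a neighbourhood is continuous. You instead give a direct, self-contained proof. Oddness converts the one-sided bound near $x_0$ into a two-sided bound $|f|\le K$ on $B(0,\delta)$. Integer scaling $f(h)=f(nh)/n$ then shrinks it to $K/n$ on $B(0,\delta/n)$.

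What your route buys: it uses nothing beyond finite additivity, and it makes transparent why a \emph{lower} bound suffices. That is exactly the delicate point of the convexity route, because a general Jensen-convex function bounded below need not be continuous. For example, $|a(x)|$ with $a$ a discontinuous additive function is Jensen-convex, bounded below by $0$, and discontinuous. So the paper's sketch really relies on $-f$ being convex as well.

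What the paper's route buys: it places the result inside a general theory where the same conclusion holds under weaker hypotheses, such as boundedness on a set of positive measure.

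Finally, your reading of ``interval'' as a ball or box with nonempty interior is the right one. For $N>1$ a bound on a segment would not suffice: $f(x_1,x_2)=x_1+a(x_2)$ is bounded on $\{(t,0):0\le t\le 1\}$ but not linear.
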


\begin{proof}

(a) If $N=1$, since we already know $f(qx) = qf(x)$ when $q\in \mbb{Q}$ from Lemma \ref{Thm_additive_f_linear_for_q}, continuity implies $f(rx) = rf(x)$ for all $r \in \mbb{R}$. Let $c=f(1)$; then $f(x)=xf(1)=cx$ for any $x\in \mbb{R}$ \cite{Kuczma2009_chap5}.

If $N>1$, because of additivity we can write $f(\ket{x}) = f(x_1)+\cdots+f(x_N)$. From the one-dimensional result, each $f(x_i)=c_i x_i$, so $f(\ket{x}) = \langle c | x\rangle = \sum_{i=1}^N c_i x_i$ \cite{Kuczma2009_chap5}.

(b) Continuity at a single point extends to the whole domain (see Theorem \ref{Thm_continuity_at_x_to_general_continuity} below). The result follows from (a).

(c) The full proof is omitted here. The key idea is that any additive function is a convex function, which satisfies

$$f(\frac{x+y}{2}) \leq \frac{f(x)+f(y)}{2}$$

for all $x,y\in \mbb{R}^N$ \cite{Kuczma2009_chap5}. It can be shown that a convex function that is bounded above or below at any point is bounded above or below everywhere, and that a bounded convex function is continuous. Readers can see \cite{Kuczma2009_chap6} for more details. The result then follows from (a).

(d) The result follows directly from (c).

\end{proof}

\begin{theorem}
\label{Thm_continuity_at_x_to_general_continuity}

If an additive function $f$ is continuous at some $x \in \mbb{R}^N$, then it is continuous on the whole domain.

\end{theorem}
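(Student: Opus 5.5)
The plan is to use additivity to translate continuity at the single point $x_0$ into continuity at the origin, and then translate continuity at the origin back to an arbitrary point $y$. The whole argument rests on the Cauchy functional equation, $f(\ket{x}+\ket{y})=f(\ket{x})+f(\ket{y})$, together with two elementary consequences of it.

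The first consequence is that $f(0)=0$, which follows from $f(0)=f(0+0)=2f(0)$. The second is that $f(-x)=-f(x)$, which follows from $0=f(x+(-x))=f(x)+f(-x)$; this is also the case $q=-1$ of Lemma \ref{Thm_additive_f_linear_for_q}. These two identities are all the algebra the argument needs.

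Next, I show that continuity at $x_0$ gives continuity at $0$. Let $h\to 0$ in $\mbb{R}^N$. By additivity, $f(h)=f(x_0+h)-f(x_0)$. Since $x_0+h\to x_0$ and $f$ is continuous at $x_0$, the right-hand side tends to $0=f(0)$. In $\epsilon$-$\delta$ form: given $\epsilon>0$, choose $\delta$ such that $\|z-x_0\|<\delta$ implies $|f(z)-f(x_0)|<\epsilon$. Taking $z=x_0+h$ then gives $|f(h)|<\epsilon$ whenever $\|h\|<\delta$.

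Finally, I transfer continuity from $0$ to an arbitrary point $y\in\mbb{R}^N$. For any $z$, additivity gives $f(z)-f(y)=f(z-y)$. If $\|z-y\|<\delta$, with the same $\delta$ as above, then $|f(z)-f(y)|<\epsilon$. So $f$ is continuous at $y$. Since the same $\delta$ works for every $y$, $f$ is in fact uniformly continuous on all of $\mbb{R}^N$, which is stronger than what the statement asks for.

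No step here is a genuine obstacle; it is a direct translation argument. The only point needing care is that $f$ must be real-valued, with $\pm\infty$ not allowed in the codomain, so that the subtractions $f(x_0+h)-f(x_0)$ and $f(z)-f(y)$ are well defined. This is guaranteed by the definition of an additive function above, and it is precisely the distinction from measures noted in the preceding remark.
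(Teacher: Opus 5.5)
Your proposal is correct and takes the same route as the paper. Additivity gives $f(h)=f(x_0+h)-f(x_0)$, which yields continuity at the origin, and then $f(z)-f(y)=f(z-y)$ transfers it to every point. Your $\epsilon$-$\delta$ version is slightly cleaner than the paper's limit manipulation, which tacitly assumes $\lim_{\sigma\to 0}f(\sigma)$ exists, and it spells out the final step the paper calls trivial, giving uniform continuity as a bonus.
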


\begin{proof}

From Lemma \ref{Thm_additive_f_linear_for_q}, $f(0) = 0$. If $f$ is continuous at a point $x$, then by additivity

$$\lim_{\sigma\rightarrow \vec{0}}f(x+\sigma)=f(x)+\lim_{\sigma\rightarrow \vec{0}}f(\sigma).$$

Hence $\lim_{\sigma\rightarrow \vec{0}}f(\sigma) = 0$, so $f$ is continuous at $\vec{0}$. It is then trivial to prove $f$ is continuous at any point.

\end{proof}

\begin{remark}
\label{remark_mu_and_additive_function}

If $f(\ket{x}) = \langle c | x \rangle$, $f$ is a linear functional . However, we cannot directly apply this result to derive the Born rule. We would like to point out two important differences between the additive function $f$ and our real-valued function $\mu$ here: first, $f$ takes $\ket{x}$ as input, and not all operators $A$ in the domain of $\mu$ can be written as $A = \ket{x}\bra{x}$. Only projective operators have such property. Second, by definition, an additive function $f$ takes $\ket{x} \in \mbb{R}^N$ as input, whereas the eigenstates of $A$ may contain complex entries.

Thus, for our purpose, we can do the following modifications: any POVM effect $A$ can be written as $A = \sum_i \eta_i \ket{x_i}\bra{x_i} ~ (\eta_i \geq 0)$. Because of additivity assumed separately, we have

$$\mu(A) = \sum_i \eta_i \mu (\ket{x_i}\bra{x_i})$$

From Theorem \ref{Thm_continuity_of_additive_function}, it is easy to show that if an additive function $f$ takes $\ket{x}\bra{x}$ as input and satisfies conditions such as continuity, then

$$f(\ket{x}\bra{x}) = \langle x |   \left( \sum _j \lambda_j |c_j \rangle \langle c_j | \right) |x \rangle ~~~~~~~ (\lambda_j \ge 0)$$

This is also true when we extend the domain of $\ket{x}$ to $\mbb{C}^N$. Clearly, $\left( \sum _j \lambda_j |c_j \rangle \langle c_j | \right) \geq 0$. We can let $\rho = \left( \sum _j \lambda_j |c_j \rangle \langle c_j | \right)$ and rescale it to obtain a trace-one operator. Combining the two equations above yields the usual Born rule,

$$\mu(\rho,A) = \tr[\rho A]$$

Note that this is not a rigorous proof. We only intend to explain the idea of how one can obtain the function $\mu$ for the Born rule from an additive function $f$ when their domains differ. A rigorous proof is just Busch's derivation \cite{busch2003}.

\end{remark}

We now show how continuity for Busch's derivation can be derived from Busch's postulates: countable additivity, normalization, and non-negativity \cite{busch2003}. Note that this proof does not work for Gleason's derivation \cite{gleason1957_original_GT}, because Gleason's countable additivity degenerates to finite additivity when the Hilbert space is finite dimensional.

\begin{theorem}
\label{thm_continuity_by_busch}

If real-valued function $\mu(A)$ ($A$ is any POVM effect) satisfies the following conditions:

(i) If effects $A+B+...\le I$, $\mu(A+B+...) = \mu(A) +\mu(B) + ...$.

(ii) $\mu(I) =1 $.

(iii) For any $A \ge 0$, $\mu(A) \ge 0$.

Then $\mu$ satisfies the continuity for Busch's derivation, i.e., for any $q\in \mathbb{Q}$ and $r \in \mathbb{R}$, 

$$\lim_{q\rightarrow r} \mu(q A)-\mu(r A) = 0 $$

\end{theorem}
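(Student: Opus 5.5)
The plan is the standard Cauchy-equation argument: additivity gives rational homogeneity, and non-negativity turns additivity into monotonicity, which lets us squeeze real multiples between rational ones. Fix a POVM effect $A$ and restrict attention to multiples $tA$ with $t\ge 0$ and $tA\le I$, so that $tA$ is again an effect. Throughout we use only (i)--(iii); (ii) guarantees all values are finite, so subtraction is legitimate.

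\emph{Step 1 (rational homogeneity).} For $n\in\mathbb{N}$, the $n$ copies of $A/n$ are compatible because their sum is $A\le I$. Condition (i) then gives $\mu(A)=n\,\mu(A/n)$. In the same way, for $m\le n$, $\mu(\tfrac{m}{n}A)=m\,\mu(A/n)$. Together these give $\mu(qA)=q\,\mu(A)$ for rational $q\in[0,1]$, exactly as in Lemma \ref{Thm_additive_f_linear_for_q}. When $qA\le I$ with $q>1$, we instead write $A=\tfrac1q(qA)$ and get the same identity, so $\mu(qA)=q\mu(A)$ holds for every admissible rational $q$.

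\emph{Step 2 (monotonicity).} If $0\le B\le C\le I$, then $C-B\ge 0$ and $B+(C-B)=C\le I$. Hence (i) gives $\mu(C)=\mu(B)+\mu(C-B)$, and (iii) gives $\mu(C-B)\ge 0$, so $\mu(B)\le\mu(C)$. In particular $0\le\mu(tA)\le\mu(I)=1$ for every admissible $t$. This is the analogue of condition (d) in Theorem \ref{Thm_continuity_of_additive_function}. It is the step where non-negativity and normalization are indispensable: they rule out the pathological Hamel-basis solutions of Lemma \ref{lemma_additive_f_not_continuous} and the value $+\infty$.

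\emph{Step 3 (squeezing).} Take real $r$ with $rA$ an effect and rationals $q_1<r<q_2$, all admissible. Since $q_1A\le rA\le q_2A$, Step 2 gives $\mu(q_1A)\le\mu(rA)\le\mu(q_2A)$. By Step 1 this reads $q_1\mu(A)\le\mu(rA)\le q_2\mu(A)$. Letting $q_1,q_2\to r$ forces $\mu(rA)=r\mu(A)$. Then $|\mu(qA)-\mu(rA)|=|q-r|\,\mu(A)\le|q-r|$, which tends to $0$ as $q\to r$; this is the claimed continuity.

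The step most likely to need care is the boundary of the admissible range. If $r$ is the largest $t$ with $tA\le I$, there is no rational $q_2>r$ with $q_2A\le I$. In that case I would use the one-sided bound $\mu(rA)\ge q_1\mu(A)$ from below, together with $\mu(rA)=\mu(q_1A)+\mu((r-q_1)A)$ from (i). The second term is controlled by choosing an integer $k$ such that $k$ copies of $(r-q_1)A$ still sum to an effect; then $\mu((r-q_1)A)\le 1/k$ by (i) and Step 2. Letting $q_1\to r$ lets $k\to\infty$, which again gives continuity at $r$.
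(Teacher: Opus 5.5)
Your proof is correct, but it takes a genuinely different route from the paper's. The paper gets continuity at $0$ from \emph{countable} additivity. Since $\mu(A)=\sum_{i\ge 1}\mu(A/2^i)$ is finite, it has $\mu(A/2^i)\to 0$, which it reads as $\lim_{\sigma\to 0}\mu(\sigma A)=0$. It then concludes with the single identity $\mu(rA)=\mu(qA)+\mu((r-q)A)$, and non-negativity and normalization are never actually used. You instead use only finite additivity: you turn (i) and (iii) into monotonicity and squeeze $\mu(rA)$ between rational multiples, which is the classical argument that a one-sidedly bounded solution of the Cauchy equation is linear.

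Your route buys three things. First, it gives the real homogeneity $\mu(rA)=r\mu(A)$ directly, which is what Busch's derivation really needs, together with the Lipschitz bound $|\mu(qA)-\mu(rA)|\le |q-r|$. Second, it shows countable additivity is unnecessary here: your estimate $\mu(\sigma A)\le 1/k$ whenever $k\sigma A\le I$ yields $\lim_{\sigma\to 0}\mu(\sigma A)=0$ from finite additivity, normalization and non-negativity alone. So the paper's remark that this limit can fail for merely finitely additive $\mu$ is true only once non-negativity is dropped. Third, your monotonicity step closes a small gap in the paper's argument, because $\mu(A/2^i)\to 0$ along one sequence does not by itself control $\mu(\sigma A)$ for every $\sigma\to 0$.

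The paper's route, once completed, buys independence from (iii). Suppose $|\mu(\sigma_n A)|\ge\epsilon$ along some $\sigma_n\to 0$. Then a subsequence with $\sum_k\sigma_{n_k}\le 1$ would make the series in (i) diverge, so countable additivity plus finiteness already suffices.

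Two minor remarks. Your boundary estimate alone handles every admissible $r$ from either side, via $\mu(rA)=\mu(qA)+\mu((r-q)A)$ or $\mu(qA)=\mu(rA)+\mu((q-r)A)$, so Step 3 and the special case could be merged. And you are right, unlike the paper, to restrict to $t$ with $0\le tA\le I$, since otherwise $tA$ is not an effect.
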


\begin{proof}

We first prove $\lim_{\sigma\rightarrow0}\mu(\sigma  A)=0$ for any POVM effect $A$. From countable additivity, we have

$$\mu(A)=\mu(\sum_{i=1}^{\infty}\frac{A}{2^i})=\sum_{i=1}^{\infty} \mu(\frac{A}{2^i})$$

Since $\mu(A) \in \mbb{R}$, the series on the right-hand side must converge. This guarantees that

$$\lim_{i\rightarrow +\infty} \mu(\frac{A}{2^i})=0$$

and hence

$$\lim_{\sigma \rightarrow 0} \mu(\sigma A) = 0$$

Notice this is not true for $\mu$ that is only finitely additive. In that case $\lim_{\sigma \rightarrow 0} \mu(\sigma A)$ can be nonzero because there is no infinite terms in the sum.

We can now prove the continuity of $\mu$. To prove that $\mu$ is continuous at $r A$ ($r \in \mbb{R}$), we need to show:

(i) $p(r A)$ exists for any $r \in \mbb{R}$. 

(ii) $p(r A)=\lim_{q \rightarrow r } p(qA)$, where $q \in \mbb{Q}$. 

The first step (i) is trivial since $r A$ is an effect. For (ii), if $q$ approaches $r$ from below, we write $rA=qA+\sigma A$ with $\sigma\ge 0$. By additivity,

$$p(r A)=p(qA+\sigma  A)=p(qA)+p(\sigma  A)$$

Thus

$$\lim_{q\rightarrow r} p(qA)=p(r A)- \lim_{\sigma \rightarrow 0} p(\sigma  A) = p(r A)$$ 

A similar proof applies if $q$ approaches $r$ from above. Therefore $\mu$ is continuous in the stated notion.

\end{proof}

\end{document}